\newcommand{\eccconst}{C_e}
\newcommand{\paran}[1]{\left( #1 \right)}
\newcommand{\amdstr}{\eta}
\tikzstyle{startstop} = [rectangle, rounded corners, minimum width=2cm, minimum height=1cm, text centered, draw=black, fill=red!20]
\tikzstyle{dummy} = [rectangle, fill=black]
\tikzstyle{process} = [rectangle, minimum width=2cm, minimum height=1cm, text centered, draw=black]
\tikzstyle{decision} = [diamond, minimum width=2cm, minimum height=1cm, text centered, draw=black]
\tikzstyle{arrow} = [thick,->,>=stealth]
\newcommand{\ecc}[1]{\texttt{ecEnc}\left( #1 \right)}
\newcommand{\ecci}[1]{\texttt{ecDec}\left( #1 \right)}
\newcommand{\silence}[1]{\texttt{IsSilence}\left( {#1} \right)}
\newcommand{\amdenc}[2]{\texttt{amdEnc}\left( #1, #2 \right)}
\newcommand{\amddec}[2]{\texttt{amdDec}\left( #1, #2 \right)}
\newcommand{\iscodeword}[2]{\texttt{IsCodeword}\left( #1, #2 \right)}
\newcommand{\apnr}{\varepsilon}
\newcommand{\errorprob}{\delta}
\newcommand{\jared}[1]{}
\newcommand{\tom}[1]{}
\newcommand{\varsha}[1]{}
\newtheorem{theorem}{Theorem}[section]
\newtheorem{lemma}{Lemma}[section]
\newcommand{\encodedMsg}[3]{\mathscr{E}_{#3}\paran{#1,#2}}
\newcommand{\decodeMsg}[2]{\mathscr{D}_{#2}\paran{#1}}
\newcommand{\decodeMsgNoArgs}{\mathscr{D}}
\newcommand{\round}{round\xspace}
\newcommand{\rounds}{rounds\xspace}
\newcommand{\Rounds}{Rounds\xspace}
\newcommand{\failAMD}{\emph{AMD Failure}\xspace}
\newcommand{\failConversionSilence}{\emph{Conversion to Silence}\xspace}
\newcommand{\failKeyInstallation}{\emph{Key Installation}\xspace}
\newcommand{\sendMessage}[0]{\text{SEND-MESSAGE}}
\newcommand{\receiveMessage}[0]{\text{RECEIVE-MESSAGE}}
\newcommand{\terminatedPi}[0]{\emph{terminated in $\pi$}\xspace}
\algnewcommand{\LineComment}[1]{\State \(\triangleright\) #1}
\newcommand{\pfl}{\mathscr{L}}
\newcommand{\hlt}[1]{{\color{red} #1}}
\title{Distributed Computing with Channel Noise}
\author{
	Abhinav Aggarwal\\
	\texttt{abhiag@cs.unm.edu}
	\and
	Varsha Dani\\
	\texttt{varsha@cs.unm.edu}
	\and
	Thomas P. Hayes\\
	\texttt{hayes@cs.unm.edu}
	\and
	Jared Saia\\
	\texttt{saia@cs.unm.edu}
}
\date{}
\begin{document}
\maketitle
\begin{abstract}
A group of $n$ users want to run a distributed protocol $\pi$ over a network where communication occurs via private point-to-point channels. Unfortunately, an adversary, who knows $\pi$, is able to maliciously flip bits on the channels.  Can we efficiently simulate $\pi$ in the presence of such an adversary?

We show that this is possible, even when $L$, the number of bits sent in $\pi$, and $T$, the number of bits flipped by the adversary are not known in advance.  In particular, we show how to create a robust version of  $\pi$ that 1) fails with probability at most $\errorprob$, for any $\errorprob>0$; and 2) sends $\tilde{O}(L + T)$ bits, where the $\tilde{O}$ notation hides a $\log (nL/ \errorprob)$ term multiplying $L$.
 
Additionally, we show how to improve this result when the average
message size $\alpha$ is not constant.  In particular, we give an
algorithm that sends $O( L (1 + (1/\alpha) \log (n L/\errorprob) + T)$
bits.  This algorithm is adaptive in that it does not require \emph{a priori}
knowledge of $\alpha$.  We note that if $\alpha$ is $\Omega\paran{\log (n L/\errorprob)}$, 
then this improved algorithm sends only $O(L+T)$ bits, and is therefore within
a constant factor of optimal.
\end{abstract}

\section{Introduction}

Suppose we start with an $n$-party distributed computational protocol, to be 
executed over a network with point-to-point communication between adjacent parties.
Now, imagine a nearly omniscient adversary, who can flip a subset of the bits sent along 
these communication channels.  This adversary knows the protocol, as well as the players'
inputs, and is unrestricted in terms of which bits she will flip.  Can the players even
hope to carry out their protocol successfully under conditions such as these?

Surprisingly, the answer is yes, in a surprisingly general sense!  However, we will
require some additional conditions.  Firstly, the adversary cannot have an unlimited
budget, or she could effectively cut off all communication forever by tossing a fair coin to decide
whether or not to flip each bit.  We thus restrict the adversary to flipping a total of $T$ bits, where $T$ is a finite number, which is unknown to our algorithm.

Secondly, we require that the channels are private and cannot be read by the adversary, and that players have access to private sources of randomness, which are also not readable by the adversary. \jared{cut: although the adversary can flip arbitrary subsets of bits on the channels - not sure how to work this in}  This is a crucial assumption, 
but one for which there are strong justifications.  If the players have access to shared secret
randomness, i.e. a \emph{one-time pad}, then the channels effectively become private, even if
the adversary can see the traffic on them.  More generally, if the players have access
to \emph{strong cryptography}, then the same conclusion holds in practice.  Also, we may be
interested in robusteness against an \emph{oblivious adversary}, such as a bursty source
of line noise.  Such a source may not actually care about our protocol, but may still time
its bursts in a way that happens to be bad for it.  Finally, our assumption of private channels
seems to be necessary, since without it, the adversary 
can institute a ``man-in-the-middle'' attack against any desired subset of the players.  Since
there is no \emph{a priori} upper or lower bound on the number of bits, $T$, flipped by the adversary, 
such an attack can always fool the players into thinking they have successully executed the protocol
and $T=0$, when in fact, all communication between the two subsets of the players was substituted
by whatever the adversary wanted.

\jared{cut: Secondly, we require that the players have access to private sources of randomness, which are not readable by the adversary.  This is necessary because, without randomness, the adversary would know in advance every message to be sent in the protocol, and, if $T$ is unknown and arbitrary, the adversary could then substitute all messages sent with any messages that she desires. \jared{changed the previous sentence to make it clear that unknown T is what allows for this attack.}}

% Can an adversary so strong as to have unbounded computational power and a budget 
% that is not known to anyone in advance be handled without a large overhead? 
% Turns out that the answer is in our favor as long as we restrict him from 
% accessing our private random bits. In fact, this tolerance exists even if we are 
% unaware of the total length of communication. \\

\paragraph{Our Problem and Result}

\par Consider a group of $n \geq 2$ users who want to run a noisefree 
asynchronous protocol $\pi$, whose length is unknown \emph{a priori}. These 
users are connected via a network of arbitrary topology that is unknown to our algorithm.  The edges of the network represent binary symmetric channels.  Communication on these channels is \emph{synchronous}.

An adversary is able to flip some finite number, $T$, of bits on any subset of the channels of the network at any time steps. The adversary chooses $\pi$, $T$, and which bits to flip on the channels. The adversary also knows our algorithm for transforming $\pi$ to $\pi'$. However, similar to the previous work by Dani et. al~\cite{DHMSM:icalp15,icalp15}, we assume that the advsesary neither knows the private random bits of any user, nor the bits sent over the channels, except when it is possible to infer these from knowledge of $\pi$ and our algorithm. \\

\jared{cut: I don't think this paragraph adds any useful information, and it actually seems to muddy the waters. 
\par Since none of the users are aware of this adversary's strategy or how 
many bits he will flip on the channels, their only defense is 
through some randomness that he has no control over. We present an 
algorithm that given an $\errorprob \in (0,1)$, compiles $\pi$ into a robust protocol $\pi'$, which tolerates such 
an adversary with probability $1 - \errorprob$, without incurring a large overhead on part 
of the users. This algorithm can handle any protocol $\pi$ that is event driven i.e.\ runs in an asynchronous environment, in a simple decentralized manner and so far, is the first such algorithm to have tackled the multiparty case in this setting. }

\jared{I tried to emphasize above that the communication in $\pi'$ is synchronous.  This seems critical for the reworded theorem to make sense.}

\par Our results are be summarized in the following theorem. 

\begin{theorem}
  \label{thm:mainUnbounded}
  Let $\pi$ be a (noisefree) asynchronous protocol for $n \ge 2$ users, and let $\errorprob > 0$.  Then, Algorithm~\ref{multipartyAlgo} compiles $\pi$ into a robust protocol $\pi'$ with the following properties.  For any finite set of $T$ (time,channel) pairs, with probability
at least $1 - \errorprob$, when the adversary flips the bits on those channels at the corresponding times, the following hold:
\begin{enumerate}
\item The simulation succeeds.  That is, each user stops after a finite number of time steps, and each ends in possession of his share of a valid transcript of $\pi$. 
\item The total number of bits sent is $\mathcal{O} \paran{L\paran{1 + \frac{1}{\alpha}\log \paran{\frac{n L}{\errorprob}}} + T}$.  Here $L$ is the total number of
bits and $\alpha$ is the average message length, for the particular transcript of $\pi$ that was generated.
\item The total latency of $\pi'$ is $\mathcal{O} \paran{\underset{p}{\max} \left \{ \Lambda_p\paran{1 + \frac{1}{\alpha}\log \paran{\frac{n(L+T)}{\errorprob}} + T_p \right \}}}$, where $p$ ranges over all communication paths in the asynchronous simulation of $\pi$, $\Lambda_p$ is the latency of $p$, and $T_p$ is the total number of bits flipped by the adversary on edges in $p$.
\end{enumerate}
\end{theorem}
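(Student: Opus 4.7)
The plan is to prove all three claims simultaneously by identifying a small family of ``bad events'' that can cause the compiled protocol to deviate from a faithful simulation of $\pi$, and then using a union bound together with an amortized accounting argument for bits and latency. The macros \failAMD, \failConversionSilence, and \failKeyInstallation\ suggest the natural taxonomy: the AMD code fails to detect a tampered codeword; an intended transmission is silenced (or a silent channel is made to look like a transmission) so that the sender and receiver disagree on whether a message went through; or the short fingerprint key $\texttt{h}$ is guessed or installed incorrectly. Each single message of $\pi$ is encoded via $\emsg{\cdot}{\cdot}{\cdot}$, whose length is $\Theta(\ell + \log(nL/\errorprob))$ for an $\ell$-bit payload, and is followed by a fingerprint of length $\Theta(\log(nL/\errorprob))$. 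Standard properties of AMD codes and of random fingerprints with private keys give a per-attempt bad-event probability of $\errorprob/\mathrm{poly}(n,L)$.

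For claim (1), I would argue by induction over the sequence of messages in the transcript of $\pi$ that, conditioned on no bad event occurring so far, the local state of each user at the end of each synchronization step is consistent with some valid prefix of a transcript of $\pi$. The total number of transmission attempts is at most (number of honest messages) $+ O(T)$ (since each adversarial flip can cause at most one \nack-induced retry of bounded size before being absorbed). A union bound over all $O(L/\alpha + T)$ attempts with per-attempt failure probability $\errorprob/\mathrm{poly}(n,L)$ yields overall failure probability at most $\errorprob$. For claim (2), the cost decomposes into (a) honest transmission, contributing $\Theta(\alpha + \log(nL/\errorprob))$ bits per logical message and hence $\Theta\paran{L\paran{1 + \tfrac{1}{\alpha}\log(nL/\errorprob)}}$ bits overall, and (b) retry work, which is $O(1)$ amortized per bit flipped because a single flip triggers at most one short \nack\ plus one bounded re-encoded retransmission.

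For claim (3), I would analyze each communication path $p$ separately. Honest messages along $p$ contribute latency $\Lambda_p$ each, scaled by the same $(1 + \tfrac{1}{\alpha}\log(n(L+T)/\errorprob))$ per-bit encoding factor, and each of the $T_p$ flips on $p$ adds at most one extra round trip of length $O(\Lambda_p)$; taking the maximum over $p$ gives the bound. The main obstacle I expect is the \failConversionSilence\ analysis: because the underlying simulation of $\pi$ is asynchronous and the channels operate synchronously via $\listen{\cdot}$ and $\silence{\cdot}$, an adversary who converts a transmission into silence (or vice versa) can desynchronize a sender--receiver pair in a way that is not caught by a single fingerprint exchange. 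Handling this will require showing that, outside an event of probability $\errorprob/\mathrm{poly}(n,L)$, any such conversion is caught at the next synchronization point before it can propagate, and that the corrective \nack\ traffic fits within the $O(T)$ amortized overhead rather than blowing up the $\tilde O(L+T)$ bound.
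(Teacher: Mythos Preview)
Your high-level decomposition into the three failure events \failAMD, \failConversionSilence, and \failKeyInstallation\ matches the paper's, and the overall structure (union bound for correctness, separate accounting for honest work and retry work) is right. However, two of the quantitative steps you rely on do not go through as stated, and they are exactly the places where the paper's argument has real content.

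First, you parametrize everything in terms of $L$: encodings of length $\Theta(\ell + \log(nL/\errorprob))$, per-attempt failure probability $\errorprob/\mathrm{poly}(n,L)$, and a union bound over $O(L/\alpha + T)$ attempts. But the whole point of the setting is that neither $L$ nor $T$ is known to the algorithm when it chooses the encoding and key lengths. The paper handles this by making the word length in round $r$ depend only on $r$, namely $w_r = \Theta(\log(nr/\errorprob))$, and by setting the per-round AMD strength and key length so that the failure probability in round $r$ is $O(\errorprob/(nr)^2)$. The union bound is then taken over \emph{all} rounds $r \ge 1$ and all $O(n^2)$ channels, using $\sum_{r\ge 1} 1/r^2 < \infty$, which is how one gets total failure probability at most $\errorprob$ without knowing in advance how many rounds there will be. Your proposal does not have this mechanism, and without it the parameter choices are circular.

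Second, your retry accounting (``a single flip triggers at most one short \nack\ plus one bounded re-encoded retransmission'') is not how the $O(T)$ overhead is obtained. A single flip does \emph{not} cause a retry in the paper's protocol, because every word is protected by a $(1/3)$-error-correcting code; to corrupt round $r$ the adversary must flip $\Omega(w_r)$ bits. The paper's argument is that if $x$ rounds are corrupted, then $T$ is at least a constant times the total word length of the $x$ cheapest rounds, i.e.\ $\tau(x) = O(T)$, and inverting $\tau$ gives $x = O(T/\log(nT/\errorprob))$ corrupted rounds. The extra bits paid for those rounds are then $O(T)$ because each corrupted round costs the adversary as much as it costs the honest parties. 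Your version, in which each flip buys one retry of cost $\Theta(\log(nL/\errorprob))$, would give $O(T\log(nL/\errorprob))$ overhead, not $O(T)$. Relatedly, on the latency side, a flip on path $p$ does not add an $O(\Lambda_p)$ round trip; a corrupted round adds $O(w_r)$ to the latency on that edge, and the same ECC-based amortization is what bounds the total by $O(T_p)$.

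Finally, your concern about \failConversionSilence\ is misplaced in the opposite direction: in the paper it is not a recoverable event caught at the next synchronization point, but simply one of the catastrophic failures whose probability is driven below $\errorprob/3$ by the random padding at the end of each word (via a Chernoff bound on bit alternations). There is no corrective traffic to account for; conditioned on this event not occurring, silence on the channel is a reliable signal.
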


We note the following.  First, our algorithm $\pi'$ always sends $\tilde{O}(L + T)$ bits, where the $\tilde{O}$ notation hides a logarithmic term in $n$, $L$ and $\errorprob$.  Second, when $\alpha$ is $\Omega\paran{\log (n L/\errorprob)}$, our algorithm sends only $O(L+T)$ bits, and is thus within a constant factor of optimal.  Finally, we stress that our algorithm requires no \emph{a priori} knowledge of $T$, $L$, $\alpha$, or the network topology, but our algorithm does require knowledge of $n$, or at least a polynomial good estimate of $n$.

\subsection{Related Work}

\paragraph{Interactive Communication} Our work is related to interactive communication.  The problem of interactive communication asks how two parties can run a protocol $\pi$ over a noisy channel.  This problem was first posed by Schulman~\cite{schulman:deterministic,schulman:communication}, who describes a deterministic method for simulating interactive protocols on noisy channels with only a constant-factor increase in the total communication complexity. This initial work spurred vigorous interest in the area (see~\cite{braverman:coding} for an excellent survey).  \\

\par Schulman's scheme tolerates an adversarial noise rate of $1/240$, even if the adversary is \emph{not} oblivious.  It critically depends on the notion of a {\it tree code} for which an exponential-time construction was originally provided. This exponential construction time motivated work on  more efficient constructions~\cite{braverman:towards-deterministic,peczarski:improvement,moore:tree}.  There were also efforts to create alternative codes~\cite{gelles:efficient,ostrovsky:error}.  Recently, elegant computationally-efficient schemes that tolerate a constant adversarial noise rate have been demonstrated~\cite{brakerski:efficient,ghaffari:optimal2}. Additionally, a large number of results have improved the tolerable adversarial noise rate~\cite{brakerski:fast,braverman:towards,ghaffari:optimal,franklin:optimal,braverman:list}, as well as tuning the communication costs to a known, but not necessarily constant, adversarial noise rate~\cite{haeupler2014interactive}. \\
	
		\par 	
	Our paper builds on a result on interactive communication by Dani et al~\cite{icalp15}, which in contrast to previous work, assumes private channels, but tolerates an unknown number of bit flips by the adversary.  Their algorithm sends $L + O \left( \sqrt{L(T+1)\log L} + T \right)$ bits in expectation.  They show that private channels are necessary in order to tolerate  unknown $T$.\\
	
	Recently, Braverman et al~\cite{braverman2015constant} show a
        strong lower bound for coding schemes for multiparty
        interactive communication when $\pi$ is a  synchronous
        protocol running on a star network, even when noise is just
        stochastic. Our result circumvents this lower bound since we
        require that $\pi$ be an asynchronous protocol. \\
	
	% \jared{The following is repetitive, boosterish, and really doesn't discuss how Braverman's result connects to us!  Cut: However, note that the problem we solve in this paper deals with 
	% enabling $n$ parties to run an asynchronous protocol in the presence of adversarial noise when both the adversarial budget $T$ and the 
	% length of the protocol $L$ is unknown to any of the users. We do not assume any network topology and claim that our algorithm works for all asynchronous protocols $\pi$. This is the first 
	% result in this area which makes such assumptions and achieves a low cost (within logarithmic factors of the optimal) 
	% with a high probability of success. }
		
%  \jared{Should probably put in work here in distributed computing - Keren Censor-Hillel's paper??}

  Recent work by Censor-Hillel, Gelles and
  Haeupler~\cite{censorhillel2017}, like ours, shows how to make an
  arbitrary asynchronous distributed protocol robust against
  adversarial noise.  In their case, the new protocol will succeed if
  up to $\Theta(1/n)$ of the messages are corrupted, and the overhead
  is a factor of $O(n \log^2 n)$ in the total communication.  In this
  setting, the fraction $1/n$ of corrupted messages is optimal,
  because the adversary can simply cut off all communication to a
  particular player.  Notably, their algorithm proceeds by essentially
  reducing to the case where the communication network is a tree. \\

Our work is not directly comparable to that of
\cite{censorhillel2017}, because we do not quantify our results
in terms of the fraction of messages that are corrupted, and more
importantly, we assume that communication takes place 
over private channels.  In our case, the stronger assumption of
private channels means that up to a $1/\log(nL)$ fraction of 
message bits can be corrupted, and our algorithm will still 
succeed with high probability, with a cost overhead that is only
$\log(nL)$.  In our setting, the adversarial strategy of trying to cut
off all communications to and from a single player requires corruption
of much more than a $1/n$ fraction of the bits, because our protocol
can detect the noise, and consequently increases the fraction of the 
total communication involving the beleaguered player.

\paragraph{Rateless Codes}  Rateless error correcting codes enable generation of potentially an infinite number of encoding symbols from a given set of source symbols with the property that given any subset of a sufficient number of encoding symbols, the original source symbols can  be recovered. Fountain codes~\cite{mackay2005fountain} and LT codes~\cite{palanki2004rateless,luby2002lt,hashemi2014near} are two classic examples of rateless codes. 	Erasure codes employ feedback for stopping transmission~\cite{palanki2004rateless,luby2002lt} and for error detection~\cite{hashemi2014near} at the receiver. Critically, the feedback channel is typically assumed to be noise free. We differ from this model in that we allow the adversary to flip bits on the feedback channel. Additionally, we tolerate bit flips, while most rateless codes tolerate only bit erasures.

	%Additionally, our model of channel noise in both directions is more challenging than is typically assumed for rateless codes. In particular, adversarial bit flips are more difficult to tolerate than erasures.
	
\subsection{Formal Model}

\paragraph{Asynchrony of $\pi$}
Recall that a synchronous network is one in which there is a global clock and each message sent in the network takes one time step to be transmitted. On the other hand, an asynchronous network is one in which messages may be arbitrarily and adversarially delayed. In particular messages need not be received in the same order as they are sent. However, every message must be eventually delivered in a finite amount of time. \\%Thus, algorithms that runs on an asynchronous network are typically event-driven, which means that after the initial set of message(s), all subsequent messages are generated as responses to received messages. This is for the purpose of maintaining a (partial) order on the chronology of messages. Obviously, these algorithms may be run on synchronous networks, since the latter are special cases of asynchronous networks, where the adversary chooses to deliver all concurrent messages at the same time. 
%However, the reverse is not true, because in an asynchronous network, the processors do not share a common view of the 
%time.\\

%\par In this paper, we deal with a network of binary private channels. Here, a bit is transmitted on the channel every time step. Thus, these are inherently synchronous and the simulating protocol $\pi'$ can be a synchronous algorithm. On the other hand, since the adversary is free to corrupt the traffic over multiple time steps on some channels, while leaving other channels completely untampered, one can think of $\pi$ as running on an induced network, and this induced network is asynchronous. Thus $\pi$ must be able to run on an asynchronous network and be an event-driven algorithm. 
%\\

\par As an example, consider a synchronous leader election algorithm in which the 
users start by waiting for their ID number of clock ticks. At the end of this, 
if a user finds no other proclamations of a leader, he proclaims himself as the 
leader. This protocol cannot be simulated in our model. To see this, suppose processor 19 is trying to decide if he is the leader.  If he hears meaningless noise on the channel for multiple time steps of $\pi'$, then he does not know whether the adversary is simply inserting this noise on a silent channel, or whether there is a processor 18 whose leadership proclamation is getting corrupted. Even if  he does not hear noise, i.e. if he hears silence all throughout, then he knows that nobody has transmitted anything to him, but he does not know how many time-steps of $\pi$ have passed in the view of the putative processor 18.

\paragraph{Synchrony of $\pi'$} We assume that our protocol $\pi'$ runs in a \emph{synchronous} network.  In particular, $\pi'$ critically relies on the fact that all processors have synchronized clocks and that communication time is fixed and known. We define a \emph{time step} as the amount of time that it takes to send one bit over a channel, assuming it to be the same for all the channels.  As is standard in distributed computing, we assume that all local computation is instantaneous. 

\paragraph{Silence on the channel} Similar to~\cite{icalp15}, when neither user sends on a channel, we say that the channel is silent. In any contiguous sequence of silent channel steps, the bit received on the channel in the first step is set by the adversary for free. By default, the bit received in the subsequent steps of the sequence remains the same, unless the adversary pays for one bit flip each time it wants to change the value of the bit received in any contiguous sequence of silent steps.

\paragraph{Additional Assumptions} We assume that all users know the desired error tolerance $\errorprob$ for $\pi'$, as well as the number of users in the network, $n$.  However, the users do not know the values of $L$ and $T$, and they have know knowledge of the underlying topology of the network connecting the users. We further assume that $\pi$ has a mechanism for users to detect the start and end of messages they receive. We emphasize that this mechanism need not necessarily involve using end-of-word symbols. Techniques such as prefix-free codes that do not use any special symbols provide a good alternative. 

\subsection{Overview of Our Result}
In simulation of $\pi$ using a synchronous protocol $\pi'$, we assume a global clock which \emph{ticks} every time step, and allows all the users to determine how long messages they should send at any time step $t \geq 1$, which is only a function of $n, \errorprob$ and $t$. Once all messages of $\pi$ have been exchanged in $\pi'$, we are sure that the users have terminated and the protocol has ended. Our approach makes this possible without the use of any synchronization messages across the network, yet keeping all communication as local to the users as possible. \\

\par A key technical challenge here is termination. The length of the simulated protocol is unknown, so the users will likely not terminate at the same time. After a user has terminated, it is a challenge for its neighbors to detect this fact based on the bits received over the noisy channel. Our approach makes use of authentication keys, which every user generates randomly for himself and shares with whoever wants to communicate with him. We use this key in a way that it makes unlikely for the adversary to successfully forge messages or silence on the channels, without detection by the affected users. Hence, a sequence of silent channel steps will suffice, with desired probability, to indicate the users of the termination of their neighbors. \\

\par Our main result is stated in Theorem~\ref{thm:mainUnbounded}. The proof of this theorem and details of how we deal with the challenges above are presented in the subsequent sections. 

\subsection{Rate}
\varsha{This needs to be put in somewhere}  \tom{How about here?  The math looks ok to me in this section, but it needs to be wrapped up at the end,
perhaps with some kind of corollary, or at least a more clear statement of the significance of the result.}
In most prior work it is assumed that the noise rate of the channel(s) is known and may be used as a parameter in the design of the algorithm. The parameter of interest then is the rate of the designed code. In our work we do not assume that the noise rate is known in advance, but only require that the adversary flip a finite number of bits. Nevertheless, in order to compare our result with other work, we compute the coding rate of or algorithm  as a function of the \emph{a posteriori} noise rate. Such a comparison is only meaningful when the adversry's total budget is less than $L$, the length of $\pi$, so for this section we will assume that $T<L$.

Let $L'$ denote the length of $\pi'$. Theorem~\ref{thm:mainUnbounded} states that our algorithm achieves 
%\begin{equation} 
\[
L' \le C L \left(1+ \frac1{\alpha}\log\left(\frac{n L}{\errorprob}\right)  \right) + CT 
\]%\end{equation}
for some constant $C$, where $\errorprob$ is the permissible failure probability for the algorithm and $\alpha$ is the average message length in $\pi$. Furthermore, since $T<L$ this translates to the absolute upper bound 
\begin{align}
L' &\le C L \left(2+ \frac1{\alpha}\log\left(\frac{n(L+T)}{\errorprob}\right)\right) \label{eqn:ubL'} \\
&\le C L \left(2+ \frac1{\alpha}\log\left(\frac{2nL}{\errorprob}\right)\right) \label{eqn:absubL'}
\end{align}

Let $\apnr = T/L'$ be the \emph{a posteriori} noise rate.
Then $T= \apnr L'$. Making this substitution for $T$ in \eqref{eqn:ubL'}, and using \eqref{eqn:absubL'} to bound the $L'$ inside the log, we have 
\begin{align*}
L' &\le C L \left(2+ \frac1{\alpha}\log\left(\frac{n\left(L+\apnr C L \left(2+ \frac1{\alpha}\log\left(\frac{2nL}{\errorprob}\right)\right)\right)}{\errorprob}\right)\right)%\right)
%&= C L \left(2+ \frac1{\alpha}\log\left(\frac{nL}{\errorprob}\left(1+\apnr C \left(2+ \frac1{\alpha}\log\left(\frac{2nL}{\errorprob}\right)\right)\right)\right)\right)\\
\end{align*}
Dividing by $L$,
\begin{align*}
\frac{L'}{L} &\le 2C+ \frac{C}{\alpha}\log\left(\frac{n\left(L+\apnr C L \left(2+ \frac1{\alpha}\log\left(\frac{2nL}{\errorprob}\right)\right)\right)}{\errorprob}\right)  \\%\right)
&= 2C+ \frac{C}{\alpha}\left( \log(nL/\errorprob) +\log\left(1+\apnr \left(2C+\frac{C\log(2nL/\errorprob)}{\alpha}\right)\right)\right)\\
&\le 2C+ \frac{C\log(nL/\errorprob)}{\alpha} +\frac{\apnr C}{\alpha} \left(2C+\frac{C\log(2nL/\errorprob)}{\alpha}\right) \\
&\le \left(2C + \frac{C\log(2nL/\errorprob) }{\alpha}\right) \left(1 +\frac{\apnr C}{\alpha} \right)
\end{align*}

To summarize, for the worst case where $\alpha=1$, the above shows that we achieve a coding rate that increases linearly with the noise rate $\apnr$ and with the logarithm of $nL/\errorprob$. In particular, the coding rate is $O((1 + \apnr) \log (nL/\errorprob))$.  For arbitrary $\alpha$, we achieve a coding rate of $O((1 + \apnr/\alpha) \log (nL/\errorprob)/\alpha)$. \jared{I'm kind of worried that we get the $\apnr/\alpha$ term here.  Is this really right???}

\subsection{Paper Organization}
The rest of this paper is organized as follows. In section~\ref{sec:unboundedTalgo}, we describe our main algorithm for the case where all messages in $\pi$ are exactly $1$ bit in length. We prove  this algorithm is correct and analyze resource costs in Section~\ref{sec:unboundedTanal}. In Sections~\ref{sec:alpha} and~\ref{sec:unboundedTanalArbLength}, we describe and analyze our algorithms for the case where messages in $\pi$ are of an arbitrary size average (thereby achieving better results for some values of the average size $\alpha$).  Finally, we conclude and give directions for future work in Section~\ref{sec:conclusion}.

\section{Our Algorithm}
\label{sec:unboundedTalgo}

We describe our algorithm in this section.  For now, we assume that 
%each message in $\pi$ is one bit long.  
$\pi$ consists of one-bit messages.
\subsection{Notation and Definitions}
Some helper functions and notation used in our algorithm are described here. 

\paragraph{Algebraic Manipulation Detection Codes} Our algorithm makes critical use of Algebraic Manipulation Detection (AMD) codes from~\cite{amd}. For a given $\amdstr > 0$, called the strength of AMD encoding, these codes provide three functions: $\texttt{amdEnc}$, $\texttt{amdDec}$ and $\texttt{IsCodeword}$. The function $\amdenc{m}{\amdstr}$ creates an AMD encoding of a message $m$. The function $\iscodeword{m}{\amdstr}$ takes a message $m$ and returns true if and only if there exists some message $m'$ such that $\amdenc{m'}{\amdstr} = m$. The function $\amddec{m}{\amdstr}$ takes a message $m$ such that $\iscodeword{m}{\amdstr}$ and returns a message $m'$ such that $\amdenc{m'}{\amdstr} = m$. We summarize the results from~\cite{amd} to highlight the important properties of these functions in the following lemma.

\begin{theorem}
	\label{thm:amd}
	There exist functions $\texttt{amdEnc}$, $\texttt{amdDec}$ and $\texttt{IsCodeword}$, such that for any $\amdstr \in \left( 0,\frac{1}{2} \right]$ and any bit string $m$ of length $x$:
	\begin{enumerate}
		\item $\amdenc{m}{\amdstr}$ is a string of length $x+2 \log \paran{\frac{1}{\amdstr}}$.
		\item $\iscodeword{\amdenc{m}{\amdstr}}{\amdstr}$ and $\amddec{\amdenc{m}{\amdstr}}{\amdstr}=m$.
		\item For any bit string $s\neq 0$ of length $x$, we have $\Pr \left( \iscodeword{\amdenc{m}{\amdstr} \oplus s}{\amdstr} \right) \leq \amdstr$.
	\end{enumerate}
\end{theorem}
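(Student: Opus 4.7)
The plan is to follow the canonical polynomial-based AMD code construction of Cramer, Dodis, Fehr, Padr\'o, and Wichs. Fix a binary extension field $\mathbb{F} = \mathbb{F}_{2^k}$ with $k$ chosen so that $k = \log(1/\amdstr)$ (adjusted by a small additive constant to absorb the degree factor in Schwartz--Zippel). View the message $m$ as a vector $(m_1, \ldots, m_d) \in \mathbb{F}^d$ with $d = \lceil x/k \rceil$, padding if necessary. Define
\[
\amdenc{m}{\amdstr} := \bigl(m,\; r,\; f(m,r)\bigr), \qquad f(m,r) := r^{d+2} + \sum_{i=1}^{d} m_i\, r^i,
\]
where $r \in \mathbb{F}$ is sampled uniformly at random. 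Let $\amddec{\cdot}{\amdstr}$ strip off the last two field elements and return the first block, and let $\iscodeword{\cdot}{\amdstr}$ parse its input as a triple $(m', r', t')$ and return true iff $t' = f(m', r')$.

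First I would verify the length claim: the codeword consists of the $x$ bits of $m$ plus two field elements of $k = \log(1/\amdstr)$ bits each, for a total of $x + 2\log(1/\amdstr)$. Completeness of $\texttt{IsCodeword}$ and $\texttt{amdDec}$ on honest encodings is immediate from the definitions. The real content is the manipulation-detection property (item 3).

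The main obstacle is to bound the probability, over the random choice of $r$, that $\amdenc{m}{\amdstr} \oplus s$ is accepted as a codeword, for an adversarially chosen nonzero $s$. Writing $s = (a, b, c)$ with $a \in \mathbb{F}^d$, $b,c \in \mathbb{F}$, acceptance is the event
\[
f(m+a,\; r+b) \;=\; f(m,r) + c.
\]
The hard step is showing that the left-minus-right side, viewed as a polynomial in $r$ of degree at most $d+2$, is not the zero polynomial whenever $(a,b,c) \neq 0$. I would split into three cases: (i) if $a \neq 0$, some coefficient of $r^i$ with $1 \le i \le d$ is perturbed by a nonzero field element; (ii) if $a = 0$ but $b \neq 0$, the top-degree expansion $(r+b)^{d+2} - r^{d+2}$ contributes a nonzero coefficient at $r^{d+1}$; (iii) if $a = 0$ and $b = 0$ but $c \neq 0$, the equation collapses to $c = 0$, which fails. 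By Schwartz--Zippel, a nonzero polynomial of degree $d+2$ has at most $d+2$ roots in $\mathbb{F}$, so the acceptance probability is at most $(d+2)/|\mathbb{F}|$, which is $\le \amdstr$ for the chosen $k$.

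This yields all three claims. One caveat I would flag in the write-up is that the error string $s$ should be interpreted as having the same length as the full codeword (namely $x + 2\log(1/\amdstr)$), not length $x$ as literally written; the natural AMD game XORs $s$ into the entire output of $\texttt{amdEnc}$, and the security argument above handles exactly that case.
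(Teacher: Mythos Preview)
The paper does not prove this theorem at all; it is stated as a black-box summary of the AMD code construction of Cramer, Dodis, Fehr, Padr\'o, and Wichs and simply cited to~\cite{amd}. Your proposal reproduces exactly that construction, so in spirit you are doing precisely what the paper defers to the reference.

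Two small technical points are worth tightening. First, in characteristic~$2$ the coefficient of $r^{d+1}$ in $(r+b)^{d+2}-r^{d+2}$ is $(d+2)\,b$, which vanishes when $d$ is even; the CDFPW construction requires $d+2$ coprime to the characteristic, so you should pad to make $d$ odd before invoking your case~(ii). Second, Schwartz--Zippel gives $(d+2)/2^k$, so with $k=\log(1/\amdstr)$ the bound is $(d+2)\amdstr$, not $\amdstr$; getting the stated overhead of exactly $2\log(1/\amdstr)$ requires either absorbing a $\log d$ term (as you hint) or reading the length claim in the theorem as asymptotic. Your flag about $s$ needing length $x+2\log(1/\amdstr)$ rather than $x$ is correct and is indeed a typo in the statement.
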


\paragraph{Error-correcting Codes} These codes enable us to encode a message so that it can be recovered even if the adversary corrupts a third of the bits. We will denote the encoding and decoding functions by \texttt{ecEnc} and \texttt{ecDec}, respectively. The following theorem, established by the results in~\cite{reedSolomon}, gives the properties of these functions. \\

\varsha{Add citation to proper theorems in the cited papers.}

\begin{theorem}
	\label{thm:ecc}
	~\cite{reedSolomon} There exists a constant $\eccconst > 0$ such that for any message $m$, we have $|\ecc{m}| \leq \eccconst |m|$. Moreover, if $m'$ differs from $\ecc{m}$ in at most one third of its bits, then $\ecci{m'}=m$.
\end{theorem}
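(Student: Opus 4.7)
The plan is to invoke standard algebraic coding theory: the statement is essentially that there exist polynomial-time encodable and decodable binary codes of constant rate correcting a constant fraction of bit errors, and the canonical construction is a Reed-Solomon outer code over $\mathrm{GF}(2^b)$ concatenated with a binary inner code of constant rate and constant relative distance.

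Concretely, I would fix an $[n_{\text{out}}, k_{\text{out}}]$ Reed-Solomon outer code with rate $\rho_{\text{out}} = k_{\text{out}}/n_{\text{out}}$ and $b = \lceil \log_2 n_{\text{out}} \rceil$; the Berlekamp--Welch algorithm decodes up to a $(1-\rho_{\text{out}})/2$ fraction of symbol errors in polynomial time. The inner code, for example a Justesen or short Wozencraft code, has constant rate $\rho_{\text{in}}$ and constant relative distance $\delta_{\text{in}}$, expanding each $b$-bit outer symbol into $b/\rho_{\text{in}}$ output bits. Decoding proceeds block-by-block on the inner layer and then applies Berlekamp--Welch to the resulting symbol stream.

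The size bound follows immediately: the composite rate $\rho_{\text{out}}\rho_{\text{in}}$ is a positive constant, giving $|\ecc{m}| \le C_e|m|$ with $C_e = 1/(\rho_{\text{out}}\rho_{\text{in}})$. For the error guarantee, a single inner block is mis-decoded only when at least a $\delta_{\text{in}}/2$ fraction of its bits are flipped, so an overall bit-error fraction $\gamma$ produces outer symbol-error fraction at most $2\gamma/\delta_{\text{in}}$, which Berlekamp--Welch tolerates provided $2\gamma/\delta_{\text{in}} \le (1-\rho_{\text{out}})/2$. Choosing parameters so this holds for $\gamma = 1/3$ gives both conclusions of the theorem.

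The main obstacle is purely a numerical calibration: pushing the tolerable bit-error fraction up to $\gamma = 1/3$ requires an inner code with relative distance close to $1/2$, furnished at positive rate by Zyablov or Gilbert--Varshamov-type existence arguments, together with an outer rate $\rho_{\text{out}}$ driven sufficiently small. All ingredients are standard and deterministic, so no probabilistic argument or union bound enters into the proof of the theorem itself; the only effect on $C_e$ is through the product of the two constant rates.
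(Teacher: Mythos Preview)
The paper does not prove this theorem; it is stated with a citation and used as a black box. Your concatenated-code argument is the standard route to such a result, and the structure---Reed--Solomon outer code with Berlekamp--Welch decoding, constant-rate binary inner code, and an averaging argument bounding the fraction of mis-decoded inner blocks---is exactly right for producing a constant-rate binary code that corrects \emph{some} constant fraction of bit errors.

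There is, however, a genuine gap in your final paragraph: the calibration to $\gamma = 1/3$ cannot be carried out. Your own inequality requires $2\gamma/\delta_{\mathrm{in}} \le (1-\rho_{\mathrm{out}})/2 < 1/2$, hence $\gamma < \delta_{\mathrm{in}}/4$. The Plotkin bound forces any binary code of positive rate to have relative distance at most $1/2$ asymptotically, so your construction yields at best $\gamma < 1/8$. More to the point, Plotkin also rules out the theorem as literally stated: a binary code uniquely decoding from a $1/3$ fraction of bit errors would need relative minimum distance exceeding $2/3$, which forces the code to have only a bounded number of codewords and hence rate tending to zero. The paper's downstream arguments only use that the adversary must flip some fixed constant fraction of the bits of a word to corrupt it, so replacing ``one third'' by a sufficiently small absolute constant repairs both the statement and your proof.
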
 

With respect to this constant $\eccconst$, we define two more constants $C_1 = 12\eccconst + 76$ and $C_2 = 32\eccconst + 115$, which will be used in our algorithm. \\

\par Finally, we observe that the linearity of \texttt{ecEnc} and \texttt{ecDec} ensure that when the error correction is composed with the AMD code, the resulting code has the following properties:
\begin{enumerate}
	\item If at most a third of the bits of the message are flipped, then the original message can be uniquely reconstructed by rounding to the nearest codeword in the range of \texttt{ecEnc}.
	\item Even if an arbitrary set of bits is flipped, the probability of the change not being recognized is at most $\amdstr$, i.e.\ the same guarantee as for the plain AMD codes.
\end{enumerate}  
This is because the error-correcting code is linear, so when noise $\eta$ is added by the adversary to the codeword $x$, effectively what happens is 
that the decoding function rounds the noise to the nearest codeword.  Thus $\ecci{x+\eta} = \ecci{x} + \ecci{\eta} = m + \ecci{\eta}$, where $m$ is the AMD-encoded message. But now $\ecci{\eta}$ is an obliviously selected string added to the AMD-encoded codeword, and hence the result is very
unlikely to be a valid message unless $\ecci{\eta} = 0$.  \\

For a string $s$, we use the notation $s[i]$ to denote the $i^{th}$ bit of $s$ and $s[i,j]$ to denote the substring $(s[i], s[i+1], \dots, s[j-1])$.
We let $|s|$ denote the length of string $s$, and use the conventions that $s = (s[0], s[1], \dots, s[|s|-1])$, and for $j > |s|$, $s[i,j] = s[i, |s|]$.
%by Length$(s)$ a function that returns the length of $s$. We also use the notation $s[i,j]$ to denote the string $r = (s[i],s[i+1],\dots,s[j-1])$. If $i+j-1 > $ %Length$(s)$, then $s[i,j]$ returns $r = (s[i],s[i+1],\dots,s[\text{Length}(s)-1])$. \\

\paragraph{Silence} We define the function $\silence{s}$ to return true iff the string $s$ has fewer than $|s|/3$ bit alternations. We also define $\mathscr{S}_\ell = \{ s \in \{0,1 \}^\ell \mid \silence{s}\}$. We drop the subscript when $\ell$ is clear from the context. \\

\subsection{Algorithm Overview}
\label{sec:overview}
For our algorithm, we assume that each pair of neighboring users communicates over a dedicated channel. The algorithm proceeds in rounds, each of which consists of the following steps. 
\begin{enumerate}
  \item If  $u$ has a message for $v$, he initiates a message exchange by asking $v$ for a key.
  \item Upon receipt of this key, $u$ sends the message along with the key.
  \item $v$ terminates the message exchange upon successful authentication and retrieval of the message.
  \item $u$ terminates the message exchange upon hearing silence from $v$.
\end{enumerate}   
This goes on until all the messages in $\pi$ have been communicated to the intended recipients.

\subsubsection*{\Rounds} For each message $m$ in $\pi$ that needs to be sent from some user $u$ to his neighbor $v$, we communicate $m$ through a sequence of exchanges between users $u$, referred to as \emph{Alice}, and $v$, referred to as \emph{Bob}, in $\pi'$ using Algorithms~\ref{aliceAlgo} and~\ref{bobAlgo}, respectively. As mentioned above, the sequence of time steps corresponding to steps 1-4 of the algorithm overview constitute what we call a \emph{round}. Thus, each round of $\pi$ consists of exactly four \emph{words}, one for each of the steps 1-4. The length of each word in round $r$ is denoted $w_r$. This depends on the round number, $r$ and is therefore a function of the time step, that can be computed independently by each user using the clock. We note that $w_r$ gradually increases with $r$. \\

We first assume that $\pi$ has single bit messages. We will extend our results to arbitrary message lengths in Section~\ref{sec:alpha}.  \\

Now, since each message in $\pi$ is just a single bit, it takes exactly one round to be communicated in $\pi'$, if no successful corruption happens, and more otherwise. If a round for some message $m$ is corrupted, we attempt resending $m$ in the subsequent round, possibly with the security increased due to the potentially increased word length. For technical reasons, Alice needs to distinguish between successive messages from $\pi$.  This is because Alice and Bob may have different views on whether a particular round was successful and there may be times when Alice is resending a message that Bob has already received. If Bob encounters two progressive rounds with no silent round in between, which contain the same message, he needs to distinguish between whether Alice is resending the message or whether Alice's next message happens to be the same bits as the previous one.  To disambiguate these cases, Alice appends a bit $b$ to each message $m$ where $b$ is the parity of index of message $m$. We will denote the pair $(m,b)$ by $M$. \\

For convenience, we assume that there are two bidirectional channels between each pair of (neighboring) users, one for each user to initiate a \round with the other. Note that, by time-slicing, we could achieve the same effect on a single bidirectional channel between each pair of users, at the cost of a factor $2$ increase in the number of time steps. \\

Both Alice and Bob generate their words for round $r$ using a function $\encodedMsg{x}{k}{r}$, described below, which returns the encoding of the word's content $x$ using the key $k$ based on the security settings for round $r$. Here $x$ may be the message $m$ from $\pi$, a special keyword \texttt{KEY?} used by Alice to request Bob's key, or Bob's key for the round.  The key $k$ is a string of length $2 \left \lceil \log \frac{4n \pi i}{\sqrt{\errorprob}} \right \rceil$ bits. In Alice's first call to $\mathscr{E}$, $k$ is a random string. In Bob's call, $k$ is the key he received from Alice in the previous word, but $x$ is a random string of the appropriate length. In Alice's second call, $k$ is the key she received from Bob in the previous word. Alice and Bob generate fresh random keys for each round in which they desire to send a message. \\

%\jared{Need to index $w_t$, $k_t$, and $\encodedMsg{x}{k}{t}$ based on the round number, not the time step}

%\jared{Need to compute the constant $C_1$ below, and remove the constant $C_2$}
The security settings are arranged so that the word length in round $r$ is given by:
The word length in round $r$ grows logarithmically, and is given by the following formula.:
\begin{equation}
\label{eq:wordLength}
	w_r = 300 \left\lceil \log(nr/\errorprob) \right\rceil.     
\end{equation}
\tom{Simplified this based on a quick read over.  Was a much messier thing.  As long as it is big enough, it's ok, since we can always pad
the words to fill up the length. } \\
% The exact word length at time $t$ is returned using the function WORD-PARAMS($t$), which first computes the index of the ongoing round at time $t$ and then returns a tuple $(w_t, \kappa_t)$ where $w_t$ is the word length in that round and $\kappa_t$ is the key length in this round \jared{Need to compute the fraction $\kappa_r / w_r$.  Hopefully we can remove the variable $\kappa$ and just report as a function of $w$}. The settings also ensure that the probability of failure of our algorithm is no more than $\errorprob$.\\ 

The function $\encodedMsg{x}{k}{r}$ is formally defined as follows.  First, it encodes the pair $(x,k)$ using an AMD code with $\amdstr_r = \frac{\errorprob}{2n^2 \pi^2 r^2}$.  The result is then encoded with a (1/3)-error-correcting code, such as a Reed-Solomon code.  Finally, enough uniformly random bits are appended to bring the total word length up to $w_r$.  These final random bits are added to ensure that, even if corrupted, any deliberately sent word is very unlikely to be mistaken for silence. \\

% At time $t \geq 1$, the function $\encodedMsg{x}{k}{t}$ first computes the index $r$ of the current round. It does so by returning the largest $r > 0$ that satisfies $\sum_{i=1}^r (4w_i) \leq t$. The function then appends the content $x$ with a key $k$ of length $\kappa_t = 2 \left \lceil \log \frac{4n \pi r}{\sqrt{\errorprob}} \right \rceil$ bits. \jared{Need a forward pointer here to the lemma where this is computed.  Should also as remove the $\pi$ term.} Then, it encodes this message with an AMD code using $\amdstr_r = \frac{\errorprob}{2n^2 \pi^2 r^2}$ and then encodes this message with a (1/3)-error-correcting code. Finally, it pads $38 \left \lceil \log \frac{2n \pi r}{\sqrt{\errorprob}} \right \rceil$ random bits to the message. These random bits are added to ensure that users are unlikely to confuse words with silence and that the adversary is unlikely to flip bits of a word to forge silence.  Then the total length of the message is at most $C_1 \log (nr/\errorprob) + C_2$.

Similar to $\encodedMsg{x}{k}{r}$, we define a function $\decodeMsg{m}{r}$ which returns either a pair $(x',k')$ such that $\encodedMsg{x'}{k'}{r} = m$ or returns $(\bot,\bot)$, if no such $(x',k')$ exists. It begins by stripping off the padding at the end of $m$ to obtain a shorter string $m'$. 
Then it decodes the error correction, computing $m'' = \ecci{m'}$.   If $\iscodeword{m''}{\amdstr_r}$, then $\decodeMsgNoArgs$ outputs $\amddec{m''}{\amdstr}$, otherwise it returns $(\bot,\bot)$.  \\

The flow of information and control in a round is illustrated in the flowchart in  Fig.~\ref{fig:statediagram}, and specified in detail in 
Algorithms~\ref{aliceAlgo} and~\ref{bobAlgo}.  \\

At the beginning of each round, Bob must listen for a key request during the first $w_r$ time steps. If no valid request is received, he idles until the start of the next round. Thus Bob is active in every round. For her part, Alice only participates in a round if, in $\pi$, she has a message to send to Bob.  \\

There are certain events which may cause our algorithm to fail. More specifically, it is possible that the adversary (1) converts one AMD codeword to another, so that the decoded content is different from the content intended; (2) converts a non-silence word into silence; and (3) correctly guesses some user's key and uses it to communicate bits that are not in $\pi$. We will discuss these failure events in detail in Section~\ref{sec:catastrophicFailure} and analyze their probabilities of occurrence in Section~\ref{sec:probFailure}. However, for the remainder of this section, we assume that none of these failure events happen. \\

In what follows we will describe various scenarios for what may happen in a round during the execution of the algorithm, for a particular bidirectional channel. There are $n^2$ such channels, and the same round may have different scenarios enacted on it on different channels.
Moreover, the views of Alice and Bob may differ on which scenario was enacted. 

\subsubsection*{Silent Rounds}

Bob listens to the first word on the channel and hears silence. Since the adversary cannot manufacture silence, Alice had no message for him and nothing further happens in the round. In this case both views agree that the round was silent.

\subsubsection*{Progressive Rounds}

These are rounds in which the number of bits flipped by the adversary is small enough that it is handled by the correction schemes. Such rounds proceed as follows.\\

Alice has a message $m$ for Bob. She requests his key using the resend bit and the keyword \texttt{KEY?}. Bob decodes Alice's key request and obtains her key. He generates his own random key and sends it to Alice using his knowledge of her key to authenticate his message as actually coming from him. Alice in turn decodes Bob's communication, obtains his key and 
sends $m$, using Bob's key to authenticate her communication. Bob correctly receives $m$, records it in his transcript of $\pi'$ and remains silent. Alice hears silence on the line and decides that Bob has successfully received her message. Here both Alice and Bob agree that the round was successful. 

\subsubsection*{Corrupted Rounds}

These are rounds when the adversary is active, and corrupts one or more of the words in the round. 

\paragraph{Case 1} Alice is silent, but the adversary sends Bob a key request. Then Bob sends 
Alice his key, but she is not listening and remains silent. At this point the adversary  can say whatever he wants on the channel. However since he does not know Bob's key, he cannot authenticate his message. So Bob receives an invalid communication, and responds with noise, but again Alice is not listening. Thus such a round is corrupted in Bob's view, but silent in Alice's. Note that Bob may realize that the round was silent in Alice's view at a later stage, but we still account for this as a corrupted round, since Bob has already incurred a cost for the corruption. 

\paragraph{Case 2} The adversary corrupts Alice's key request. Then Bob does not receive a valid key request, and hence remains silent the rest of the round. Since Bob is silent, the adversary may say whatever he wants on the channel. In particular he may try to pretend he is Bob and send Alice a bogus key. However since he cannot guess Alice's key, he cannot authenticate himself as Bob. So Alice receives an invalid communication, and stays silent until the end of the round. 

\paragraph{Case 3} Bob receives Alice's key request, but the adversary corrupts his communication containing his key. Since he cannot corrupt one AMD codeword into another, Alice cannot decode the message into a key for Bob, causing her to remain silent for the remainder of the round. Again, the adversary cannot install a bogus message from Alice because he cannot guess Bob's key, so Bob receives an invalid communication.

\paragraph{Case 4} The adversary is inactive for the first half of the round and Alice receives Bob's key. Then the adversary corrupts her communication of $m$. Then Bob receives garbage at his end and injects noise into the channel. Since the adversary cannot convert this noise into silence, Alice knows that the round has failed.

In cases 2, 3 and 4, Alice and Bob both know that the round has been corrupted. Alice will retry sending her message in the next round. 

\paragraph{Case 5} The round succeeds all the way to the point where Bob correctly receives $m$, decides the round is successful and remains silent. Then the adversary injects noise onto the channel causing Alice to think the round has failed. In this case again, Alice and Bob's views differ. However this is not a problem because Alice will simply resend the message, and Bob will receive it again, in the next round that is successful in his view. However Bob can recognize the message as a repetition because of the parity bit. At that stage, of course, Bob realizes that the round was in fact, corrupted, but this makes no difference to his future actions.

\subsection{Failure Events}
\label{sec:catastrophicFailure}
As mentioned before, there are certain events can cause catastrophic failure from which the algorithm cannot recover. 
\begin{enumerate}
	\item \emph{Failure of AMD codes} : The adversary's bit flips  happen to convert an AMD codeword into another valid AMD codeword. In this case the decoded content differs from the intended content resulting in authentification failure, when the content was Bob's key, or incorrect simulation of $\pi$ when the content was Alice's message.
	\item \emph{Conversion to silence} : The adversary's bit flips are such that the resulting word looks like silence to its recipient. If the noise sent by Bob in line~\ref{alg:bobNoise} to request a resend is converted to silence, then Alice incorrectly assumes that Bob has received her message and stops transmitting it. This results in an incorrect transcript of $\pi$. Other words being converted to silence could result in a player being silent on the following word, which in conjuntion with guessing the key (see below) could result in the adversary being able to say whatever he wants on the channel.
	\item \emph{Guessing the key} : On a round when Alice is silent in $\pi$, the adversary can install a bogus key request on her channel to Bob. Ordinarily this is not a problem because when Bob responds with his key, the adversary cannot read it, and therefore cannot send Bob an authenticated message. However, if he happens to guess Bob's key, then he can send Bob a message purporting to be from Alice, resulting in an incorrect simulation of $\pi$.  
\end{enumerate}

We will show in Section~\ref{sec:probFailure} that the probability of such a catastrophic failure over the entire run of the algorithm is at most $\errorprob.$

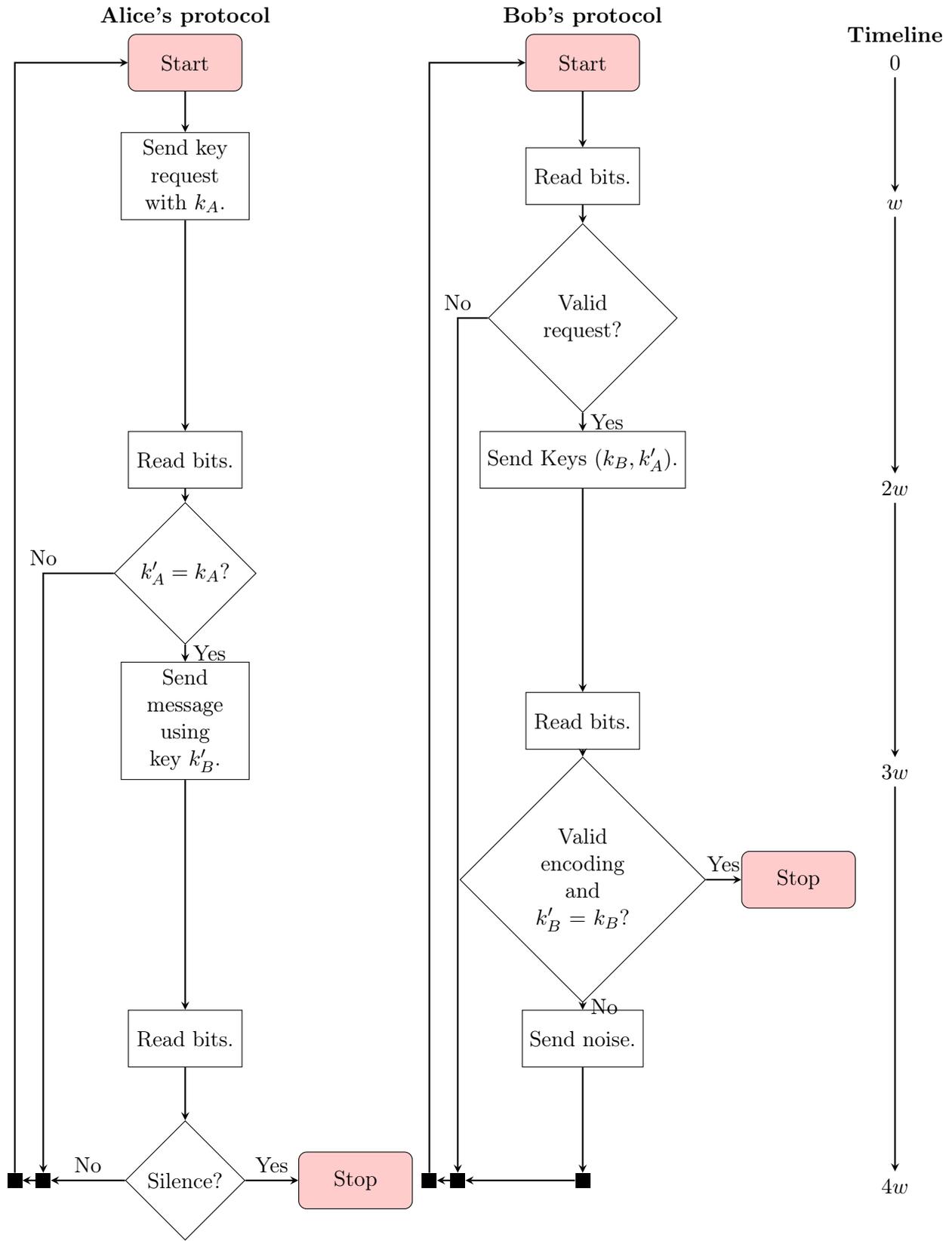
\begin{figure}
\begin{center}
\begin{tikzpicture}[node distance=2cm]

\node (a_start) [startstop] {Start};
\node (b_start) [startstop, right of = a_start, node distance = 7cm] {Start};
\node (aname) [above of = a_start, node distance = 0.8cm] {\textbf{Alice's protocol}};
\node (bname) [above of = b_start, node distance = 0.8cm] {\textbf{Bob's protocol}};
\node (a_requestKey) [process, below of = a_start, text width = 2cm] {Send key request with $k_A$.};
\node (b_readBits) [process, below of = b_start] {Read bits.};
\node (b_validKey) [decision, below of = b_readBits, node distance = 2.5cm, text width = 2cm] {Valid request?};
\node (b_sendKey) [process, below of = b_validKey, node distance = 2.5cm] {Send Keys $(k_B,k_A')$.};
\node (a_readBits) [process, left of = b_sendKey, node distance = 7cm] {Read bits.};
\node (a_validKey) [decision, below of = a_readBits, node distance = 2cm] {$k_A'=k_A$?};
\node (a_sendMessage) [process, below of = a_validKey, node distance = 2.6cm, text width = 2cm] {Send message using key $k_B'$.};
\node (b_readBits2) [process, right of = a_sendMessage, node distance = 7cm] {Read bits.};
\node (b_validMessage) [decision, below of = b_readBits2, node distance = 2.8cm, text width = 2cm] {Valid encoding and $k_B'=k_B$?};
\node (b_sendNoise) [process, below of = b_validMessage, node distance = 2.8cm] {Send noise.};
\node (b_stop) [startstop, right of = b_validMessage, node distance = 3.8cm] {Stop};
\node (a_readBits2) [process, left of = b_sendNoise, node distance = 7cm] {Read bits.};
\node (a_silence) [decision, below of = a_readBits2, node distance = 2.5cm] {Silence?};
\node (a_dummy2) [dummy, left of = a_silence, node distance = 2.5cm] {};
\node (a_dummy3) [dummy, left of = a_dummy2, node distance = 0.5cm] {};
\node (a_stop) [startstop, right of = a_silence, node distance = 3cm] {Stop};
\node (b_dummy3) [dummy, below of = b_sendNoise, node distance = 2.5cm] {};
\node (b_dummy2) [dummy, left of = b_dummy3, node distance = 2.2cm] {};
\node (b_dummy4) [dummy, left of = b_dummy2, node distance = 0.5cm] {};

% Alice's arrows
\draw [arrow] (a_start) -- (a_requestKey);
\draw [arrow] (a_requestKey) -- (a_readBits);
\draw [arrow] (a_readBits) -- (a_validKey);
\draw [arrow] (a_validKey) -| node[anchor=south] {No} (a_dummy2);
\draw [arrow] (a_dummy3) |- (a_start);
\draw [arrow] (a_validKey) -- node[anchor=west] {Yes} (a_sendMessage);
\draw [arrow] (a_sendMessage) -- (a_readBits2);
\draw [arrow] (a_readBits2) -- (a_silence);
\draw [arrow] (a_silence) -- node[anchor=south] {No} (a_dummy2);
\draw [arrow] (a_dummy2) -- (a_dummy3);
\draw [arrow] (a_silence) -- node[anchor=south] {Yes} (a_stop);

% Bob's arrows
\draw [arrow] (b_start) -- (b_readBits);
\draw [arrow] (b_readBits) -- (b_validKey);
\draw [arrow] (b_validKey) -| node[anchor=south] {No} (b_dummy2);
\draw [arrow] (b_dummy4) |- (b_start);
\draw [arrow] (b_validKey) -- node[anchor=west] {Yes} (b_sendKey);
\draw [arrow] (b_sendKey) -- (b_readBits2);
\draw [arrow] (b_readBits2) -- (b_validMessage);
\draw [arrow] (b_validMessage) -- node[anchor=south] {Yes} (b_stop);
\draw [arrow] (b_validMessage) -- node[anchor=west] {No} (b_sendNoise);
\draw [arrow] (b_sendNoise) -- (b_dummy3);
\draw [arrow] (b_dummy3) -- (b_dummy2);
\draw [arrow] (b_dummy2) -- (b_dummy4);

% Timing info
\node (time0) [right of = b_start, node distance = 5.5cm] {$0$};
\node (time1) [below of = time0, node distance = 2.5cm] {$w$};
\node (time2) [below of = time1, node distance = 5cm] {$2w$};
\node (time3) [below of = time2, node distance = 5cm] {$3w$};
\node (time4) [below of = time3, node distance = 7.3cm] {$4w$};
\node (time) [above of = time0, node distance = 0.5cm] {\textbf{Timeline}};

\draw [arrow] (time0) -- (time1);
\draw [arrow] (time1) -- (time2);
\draw [arrow] (time2) -- (time3);
\draw [arrow] (time3) -- (time4);

\end{tikzpicture}
\caption{Flowchart for Alice and Bob during each \round. Here $w$ denotes the word length in the current round.}
\label{fig:statediagram}
\end{center}
\end{figure}

\begin{algorithm}[H]
	\caption{Message exchange algorithm for the sender. {\tt Send-Message} is only called at the beginning of a round.}
	\label{aliceAlgo}
	\begin{algorithmic}[1]
		\Procedure{\sendMessage}{$m$} 
		\LineComment{$b$ is a persistent variable for the parity bit. On the first call to \sendMessage, $b$ is set}
		\LineComment{to $0$. On subsequent calls it is whatever it was set to on the previous call.}
		\While{\texttt{true}}
%			\State $(w_t,\kappa_t) \gets$ \Call{Word-Params}{$t$} \Comment{$t$ is the current time step.}
			\State Generate random key $k_A$ of length $\kappa_r$ \Comment{$r$ is the current round number.}.
			\State \label{step} Send $\encodedMsg{\texttt{KEY?}}{k_A}{r}$.
			\State \label{alg:AliceReadsBobsKey} $M_1 \gets w_r$ bits on the channel from the receiver.
			\If{$\silence{M_1}$}		\Comment{assume the receiver has already terminated.}
				\State \label{alg:AliceReturnsBecauseSilence}Stay silent for $2w_r$ time steps and \Return
			\Else 
				\State $(x,k) \gets \decodeMsg{M_1}{t}$
				\If {$k \neq k_A$} 
				\State Stay silent for $2w_r$ time steps.
			\Else
				\State $k_B \gets x$
				\State \label{alg:AliceKey} Send $\encodedMsg{(m,b)}{k_B}{t}$. 
				\State $M_2 \gets w_r$ bits on the channel from the receiver.
				\If{$\silence{M_2}$}
					\State \label{alg:AliceReturnsFinally} $b \gets \lnot b$ and \Return
				\EndIf
			\EndIf
			\EndIf
		\EndWhile
		\EndProcedure
	\end{algorithmic}
\end{algorithm}

\begin{algorithm}[H]
	\caption{Message exchange algorithm for the receiver.  Receive-Message is only called at the beginning of a round.}
	\label{bobAlgo}
	\begin{algorithmic}[1]
		\Procedure{\receiveMessage}{\ }
		\LineComment{$\hat{b}$ is a persistent variable for the parity bit. On the first call to \receiveMessage, $\hat{b}$ is }%set to $0$.} 
		\LineComment{set to $0$. On subsequent calls it is whatever it was set to on the previous call.}
%		\State \label{Bobstep} $(w_r,\kappa_t) \gets$ \Call{Word-Params}{$t$}	\Comment{$t$ is the current time step.}
		\State $M_1' \gets w_r\ $bits on the channel from the sender. 
		\If {$\silence{M_1'}$}
			\State Stay silent for $3w_r$ time steps.  \Comment{$r$ is the current round number.}
		\Else
			\State $(x',k') \gets \decodeMsg{M_1'}{t}$ 
			\If {$x' \neq \texttt{KEY?}$}
			\State Send noise for $w_r$ time steps.
			\State Stay silent for $2w_r$ time steps.
		\Else
			\State $k_A \gets k'$. 
			\State Generate random key $k_B$ of length $\kappa_t$.
			\State \label{alg:BobKey} Send $\encodedMsg{k_B}{k_A}{t}$.
			\State $M_2' \gets w_r\ $bits on the channel from the sender.
			\State $(x'',k'') \gets \decodeMsg{M_2'}{t}$
			\If{$k'' \neq k_B$}
				\State \label{alg:bobNoise} Send noise for $w_r$ time steps.
			\Else
				\State \label{alg:bobDecode} $(m',b') \gets x''$
				\If {$b' \neq \hat{b}$ different from last message}
					\State \label{bob:record} Set $\hat{b} \gets b'$ and record the message $m'$ from the sender.
				\EndIf
				\State Stay silent for $w_r$ time steps.
			\EndIf
			\EndIf
		\EndIf
		\EndProcedure
	\end{algorithmic}
\end{algorithm}

\subsection{The protocol $\pi'$}

Our main protocol $\pi'$ is presented as Algorithm~\ref{multipartyAlgo}.  We make use of the definition of an I/O automaton (see~\cite{lynch1996distributed} Chapter 14.1.1) to represent $\pi$.  We assume that for each user $u$ in the network, $\pi$ provides an I/O automaton $\pi_u$ with the following properties.

\begin{itemize}
	\item $\pi_u$ has a single \emph{initial} state
	\item $\pi_u$ has some subset of states that are \emph{termination} states.  Each termination state may have a value for $u$ to \emph{output}.
	\item There is a set of \emph{transition relations}, each from one state to another state, each labeled with an \emph{action}, where this action may be either an \emph{input} action (e.g. receiving a message) or an \emph{output} action (e.g. sending a message).  These transitions satisfy the property for every state $s$, for every possible input action, $a$, there is a transition from $s$ to some other state that is labelled with $a$.
\end{itemize}

\begin{algorithm}[H]
	\caption{Protocol $\pi'$, run at each node $u$}
	\label{multipartyAlgo}
	\begin{algorithmic}[1]
		\State $s \leftarrow$ initial state of $\pi_u$ 
		\While{$s$ is not a termination state}
		\If{there is a transition relation from state $s$ to state $s'$ in $\pi_u$ labeled with an output action to send message $m$ to neighbor $v$}
				\State Schedule $\Call{\sendMessage}{m}$ to run on the channel to $v$. %\par \quad If no call to $\sendMessage$ is currently running on the channel, begin the call immediately. 
\par\quad If no call to $\sendMessage$ is currently running on the channel, begin immediately. \par \quad Otherwise, set up the call to begin as soon as all currently scheduled $\sendMessage$ 
\par\quad calls on this channel have finished running.
				\State Transition to state $s'$ in $\pi_u$; $s \leftarrow s'$.
			\Else
			\Repeat
			\State for each neighbor $v$, in parallel run $\Call{\receiveMessage}$ on the channel from $v$.
			\Until {the $\receiveMessage$ calls have recorded some non-empty set $S$ of messages}
			\For{each message $m$ in $S$}
				\State Let $s'$ be the target state on the transition relation from $s$ with input action $m$.
				\State Transition to state $s'$ in $\pi_u$; $s \leftarrow s'$. 
			\EndFor
			\EndIf
		\EndWhile
		\State \label{alg:BobTerminationState}Set processor $u$'s output based on the termination state.
		\State Continue executing any remaining scheduled $\sendMessage$ calls until all have returned.
		%\State \Return
	\end{algorithmic}
\end{algorithm}  

\section{Analysis}
\label{sec:unboundedTanal}

We now analyze the main algorithm as presented in Section~\ref{sec:unboundedTalgo}. We begin by computing the failure probability for the algorithm, by considering the three bad events as before, and then take a union bound over all the \rounds. We will then prove that the algorithm is correct and terminates in finitely many time steps. Finally, we compute an upper bound on the expected number of bits the algorithm sends. 

\subsection{Probability of Failure}
\label{sec:probFailure}
We define three bad events for a round.
\begin{enumerate}
	\item \emph{\failAMD} : The adversary is able to flip the bits to convert the message into another valid word. In this case, the affected users end their \rounds either with authentication failures or with knowledge of bits which are not in $\pi$.
	\item \emph{\failConversionSilence} : The adversary is able to flip bits in such a way that some user's random bits look like silence to his neighbor, resulting in the latter ending his \round without the knowledge of this failure.
	\item \emph{\failKeyInstallation} : The adversary installs a correct key from Bob when Alice is silent (to successfully simulate line~\ref{alg:AliceKey} of Algorithm~\ref{aliceAlgo}) or installs a correct key from Alice when Bob is silent (to successfully simulate line~\ref{alg:BobKey} of Algorithm~\ref{bobAlgo}).
\end{enumerate}
We now bound the probabilities of each of these failure events to eventually prove that our algorithm succeeds with probability $1-\errorprob$.

\begin{lemma}
	\label{lem:AMDfailureunbounded}
	In $\pi'$, \emph{\failAMD} occurs with probability at most $\errorprob/3$.
\end{lemma}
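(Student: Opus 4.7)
The plan is to bound, for a single round $r$ on a single bidirectional channel, the probability that any adversarially flipped word gets decoded as a different valid AMD codeword, and then take a union bound over all channels and all rounds. The key ingredient is Theorem~\ref{thm:amd} part~3, which says that for any nonzero flip pattern $s$ chosen without knowledge of the encoder's randomness, $\Pr\!\left(\iscodeword{\amdenc{m}{\amdstr}\oplus s}{\amdstr}\right)\le \amdstr$.

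First I would observe that in any round $r$ on any directed channel, there are at most three AMD-encoded words that could be transformed into a different valid codeword by the adversary: Alice's key request, Bob's key response, and Alice's message. (The fourth word, Bob's silence/noise, is not AMD-encoded and is handled under \failConversionSilence, not here.) For each such word, the relevant noise string $s$ that the adversary overlays on the channel is determined by bits the adversary does not read, since the channels are private and the users' random coins (including the key material baked into the encoding) are hidden. Hence $s$ is statistically independent of the AMD randomness in the sense required by Theorem~\ref{thm:amd}(3), so the probability that a given one of these three words suffers \failAMD{} is at most $\amdstr_r=\errorprob/(2n^2\pi^2 r^2)$. A union bound over the (at most) three encoded words then gives a per-round, per-channel \failAMD{} probability of at most $3\amdstr_r$.

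Next I would union-bound over all $O(n^2)$ directed channels and all rounds $r\ge 1$. Since the round-$r$ strength is calibrated to shrink as $1/r^2$, the total probability is bounded by
\[
\sum_{r=1}^{\infty} n^2 \cdot 3\amdstr_r \;=\; \sum_{r=1}^{\infty} n^2 \cdot \frac{3\errorprob}{2 n^2 \pi^2 r^2} \;=\; \frac{3\errorprob}{2\pi^2}\sum_{r=1}^{\infty}\frac{1}{r^2} \;=\; \frac{3\errorprob}{2\pi^2}\cdot\frac{\pi^2}{6} \;=\; \frac{\errorprob}{4} \;\le\; \frac{\errorprob}{3},
\]
which is exactly why the designers chose $\amdstr_r$ with the $2n^2\pi^2 r^2$ denominator.

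The main technical subtlety, and what I would want to state carefully, is the independence claim: the adversary's bit-flip string $s$ applied to an AMD codeword is selected as a (possibly adaptive) function of everything the adversary can see, but crucially not of the user's private key or of the AMD encoding's internal randomness on that channel. This means we can apply Theorem~\ref{thm:amd}(3) conditionally on the adversary's entire view at the moment of the flip, and then integrate out. Getting this conditioning argument right (especially in later rounds, where prior rounds' random outcomes could in principle be correlated with current randomness) is the only delicate point; everything else is a direct application of the AMD bound and a clean $\sum 1/r^2$ calculation.
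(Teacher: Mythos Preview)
Your proof is correct and takes essentially the same approach as the paper: a union bound over all words, channels, and rounds, using the per-word AMD failure probability $\amdstr_r$ together with $\sum_{r\ge 1} 1/r^2 = \pi^2/6$. The only minor differences are that the paper counts four words per round rather than your three (yielding exactly $\errorprob/3$ instead of your $\errorprob/4$), and that it leaves implicit the independence/conditioning argument you take care to spell out.
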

\begin{proof}
	Recall that in any round, four words are exchanged between any pair of users, and that at most ${n \choose 2} < n^2$ pairs of users may be exchanging words with each other. Furthermore, in a given round $r$, the AMD failure probability for a single word is set to be at most $\amdstr_r = \paran{n \pi r}^{-2} \errorprob/2$, as discussed in Section~\ref{sec:overview}. \tom{Can we make that a more precise lemma to cite?  Section 2.2 is pretty big.} Now, for $1 \leq k \leq 4, 1 \leq i\neq j \leq n$ and $r \geq 1$, define $\xi_{i,j,k,r}$ to be the event that in round $r$, AMD failure occurs in the $k^{th}$ word exchanged between users $i$ and $j$. Then, $\Pr( \xi_{i,j,k,r} ) \leq \paran{n \pi r}^{-2} \errorprob/2$. Hence, by a union bound, we get 
\[ 
\Pr \left( \bigcup_
%{\substack{1 \leq k \leq 4 \\ 1 \leq i \neq j \leq n \\ r \geq 1}}
{ i,j,k,r}
\xi_{i,j,k,r} \right) \leq \sum_{k=1}^4 \sum_{\substack{i,j = 1 \\ i \neq j}}^n \sum_{r \geq 1} \frac{\errorprob}{2(n \pi r)^2} 
\le \frac{2 \errorprob}{\pi^2} \sum_{r \ge 1} \frac{1}{r^2}
= \frac{\errorprob}{3}. 
\]
\end{proof}

\begin{lemma}
		\label{lem:chernoff}
		For $b\geq 95$, the probability that a $b$-bit string sampled uniformly at random from $\{0,1 \}^b$ has fewer than $b/3$ bit alternations is at most $e^{-b/19}$.
	\end{lemma}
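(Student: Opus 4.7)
The plan is to apply a Chernoff-style concentration bound after recognizing that the number of bit alternations is a sum of independent Bernoulli(1/2) random variables.

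First I would set up the right random variables. Define $X_i = \mathbf{1}[s[i] \neq s[i+1]]$ for $i = 1, \ldots, b-1$, so that the number of bit alternations is $X = \sum_{i=1}^{b-1} X_i$. The key observation is that although the $X_i$'s are defined on overlapping pairs of bits, the map $(s_1, s_2, \ldots, s_b) \mapsto (s_1, X_1, X_2, \ldots, X_{b-1})$ is a bijection on $\{0,1\}^b$ (just XOR cumulatively to recover $s$). Therefore, when $s$ is uniform on $\{0,1\}^b$, the tuple $(s_1, X_1, \ldots, X_{b-1})$ is uniform on $\{0,1\}^b$, which in particular means $X_1, \ldots, X_{b-1}$ are i.i.d.\ Bernoulli$(1/2)$. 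Hence $\mathbb{E}[X] = (b-1)/2$.

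Next I would apply Hoeffding's inequality: for independent $[0,1]$-bounded variables, $\Pr(X \leq \mathbb{E}[X] - t) \leq \exp\!\paran{-2t^2/(b-1)}$. To match the target event $X < b/3$, choose $t = (b-3)/6$, so that $\mathbb{E}[X] - t = (b-1)/2 - (b-3)/6 = b/3$. This yields
\[
\Pr\paran{X < b/3} \leq \exp\!\paran{-\frac{(b-3)^2}{18(b-1)}}.
\]

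Finally I would verify that for $b \geq 95$ the exponent is at least $b/19$, i.e., that $19(b-3)^2 \geq 18 b (b-1)$, which after expansion becomes $b^2 - 96 b + 171 \geq 0$. The roots of the corresponding quadratic are approximately $1.81$ and $94.19$, so the inequality holds for all $b \geq 95$ (and this is essentially tight, explaining the hypothesis $b \geq 95$). Combining gives $\Pr(X < b/3) \leq e^{-b/19}$, as required.

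The main subtlety is the independence claim for the $X_i$'s, since they are defined from overlapping pairs of bits; I would emphasize the bijection argument above. The rest is just a careful choice of the Hoeffding deviation $t$ and an elementary verification of the quadratic inequality that pins down the constant $95$.
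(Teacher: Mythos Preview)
Your proof is correct and follows essentially the same route as the paper: define the alternation indicators $X_i$, observe they are i.i.d.\ Bernoulli$(1/2)$, apply a standard tail bound to obtain $\Pr(X<b/3)\le e^{-(b-3)^2/(18(b-1))}$, and then check the quadratic $b^2-96b+171\ge 0$ to reach $e^{-b/19}$ for $b\ge 95$. The only cosmetic difference is that the paper parametrizes the deviation as $t\sqrt{b-1}/2$ with $t=(b-3)/(3\sqrt{b-1})$ and calls it a multiplicative Chernoff bound, whereas you use Hoeffding with $t=(b-3)/6$; both yield the identical exponent, and your bijection argument for the independence of the $X_i$'s is in fact more explicit than the paper's bare assertion.
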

	\begin{proof}
		Let $s$ be a string sampled uniformly at random from $\{ 0,1\}^b$, where $b \geq 95$. Denote by $s[i]$ the $i^{th}$ bit of $s$. Let $X_i$ be the indicator random variable for the event that $s[i]\neq s[i+1]$, for $1 \leq i < b$. Note that all $X_i$'s are mutually independent. Let $X$ be the number of bit alternations in $s$. Clearly, $X = \sum_{i=1}^{b-1}X_i$, which gives $\mathbb{E}(X) = \sum_{i=1}^{b-1}\mathbb{E}(X_i)$, using the linearity of expectation. Since $\mathbb{E}(X_i)=1/2$ for all $1 \leq i < b$, we get $\mathbb{E}(X) = (b-1)/2$. Using a multiplicative version of Chernoff's bound (see ~\cite{dubhashi2009concentration}), we have that for $0 \leq t \leq \sqrt{b-1}$,
		\begin{equation*}
			\Pr \left( X < \frac{b-1}{2} - \frac{t\sqrt{b-1}}{2} \right) \leq e^{-t^2 /2}.
		\end{equation*}
		To obtain $\Pr (X < b/3 )$, we set $t = \frac{b-3}{3\sqrt{b-1}}$ to get
		\begin{equation*}
		\Pr( X < b/3) \leq e^{-\frac{(b-3)^2}{18(b-1)}} \leq e^{- b/19} \quad \text{for $b \geq 95$},
		\end{equation*}
where the condition $b \ge 95$ comes from rounding the solution to a quadratic equation.
\end{proof} 

\begin{lemma}
	\label{lem:Silenceunbounded}
	In $\pi'$, \emph{\failConversionSilence} occurs with probability at most $\errorprob/3$.
\end{lemma}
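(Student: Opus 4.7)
My plan is to bound the probability of \failConversionSilence on any single word, then union-bound over all words in all rounds. The key ingredient is that each deliberately sent word ends in a long block of uniformly random padding, which remains uniformly distributed even after the adversary's (blind) XOR.

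Recall that $\encodedMsg{x}{k}{r}$ consists of a Reed--Solomon encoding of an AMD codeword, padded with uniform random bits up to total length $w_r = 300\lceil\log(nr/\errorprob)\rceil$. I would first upper bound the non-padding length $\ell_r$ by plugging in the key length $2\lceil \log(4n\pi r/\sqrt{\errorprob})\rceil$, the AMD overhead $2\log(1/\amdstr_r)$, and the Reed--Solomon expansion factor $\eccconst$; this gives $\ell_r = O(\log(nr/\errorprob))$. For the choice $w_r = 300\lceil\log(nr/\errorprob)\rceil$, this will show that the padding length $p_r := w_r - \ell_r$ occupies a fraction of $w_r$ strictly larger than $2/3$, say $p_r \geq \tfrac{3}{4} w_r$, for all $r \geq 1$ and $n \geq 2$. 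The (noise) word sent by Bob at line~\ref{alg:bobNoise} is already uniformly random of length $w_r$, so it fits into the same analysis with $p_r = w_r$.

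Since the adversary cannot see the bits on private channels, her flip pattern $\eta$ is chosen independently of the random padding $\rho$; hence the padding region of the received word, $\rho \oplus \eta_{\mathrm{pad}}$, is distributed uniformly on $\{0,1\}^{p_r}$. Let $Z$ denote the number of bit alternations within this region; then $Z \sim \mathrm{Binomial}(p_r - 1, 1/2)$ and is a lower bound on the total alternations in the received word. Because $w_r/3$ lies strictly below the mean $(p_r-1)/2 \geq \tfrac{3w_r}{8} - \tfrac{1}{2}$ by a constant fraction of $w_r$, a Chernoff--Hoeffding inequality of the same flavor as Lemma~\ref{lem:chernoff} gives
\[
\Pr\!\paran{Z < w_r/3} \;\leq\; e^{-c\, w_r}
\]
for some absolute constant $c>0$. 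Substituting $w_r = 300\lceil\log(nr/\errorprob)\rceil$, the per-word probability of being mistaken for silence is at most $(nr/\errorprob)^{-300c}$, which for the chosen constant $300$ is an inverse polynomial of arbitrarily large degree.

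Finally I would take a union bound over the at most four deliberately sent words per round on each of the at most $n^2$ channels, summed over all rounds $r \geq 1$, mirroring the structure of Lemma~\ref{lem:AMDfailureunbounded}. Provided $300c > 2$, the series $\sum_{r\geq 1} 4n^2 (nr/\errorprob)^{-300c}$ converges and can be bounded by $\errorprob/3$. The main obstacle will be tracking the explicit constants, namely verifying that $\ell_r$ is indeed small enough that $p_r \geq \tfrac{3}{4}w_r$ (so that the Chernoff gap between $(p_r-1)/2$ and $w_r/3$ is linear in $w_r$) and that the resulting $300c$ is large enough for the union-bounded series to sum to at most $\errorprob/3$; once the constants line up, the rest is routine.
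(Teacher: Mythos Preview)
Your proposal is correct and follows the same approach as the paper: apply a Chernoff bound to the uniformly random padding to control the per-word probability of being mistaken for silence, then union-bound over the four words per round, the at most $n^2$ channels, and all rounds $r \ge 1$, exactly mirroring Lemma~\ref{lem:AMDfailureunbounded}. If anything, your treatment is slightly more careful than the paper's, which invokes Lemma~\ref{lem:chernoff} directly with $b$ equal to the padding length without explicitly checking (as you do) that the padding occupies a large enough fraction of $w_r$ for the alternation count in the padding alone to exceed the threshold $w_r/3$ rather than merely $p_r/3$.
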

\begin{proof}
	Recall that in round $r$, each encoded message includes $38 \left \lceil \log \paran{\frac{2n \pi r}{\sqrt{\errorprob}}} \right \rceil$ random bits at the end. Then, Lemma~\ref{lem:chernoff} tells us that the probability that adversary is able flip these random bits of a word to forge silence is at most $2^{-\frac{38 \left \lceil \log \paran{\frac{2n \pi r}{\sqrt{\errorprob}}} \right \rceil}{19}} \leq \paran{n \pi r}^{-2} \errorprob/4$. Thus, similar to Lemma~\ref{lem:AMDfailureunbounded}, for $1 \leq k \leq 4, 1 \leq i\neq j \leq n$ and $r \geq 1$, define $\xi_{i,j,k,r}$ to be the event that in round $r$, the $k^{th}$ word exchanged between users $i$ and $j$ is converted to silence, so that $\Pr (\xi_{i,j,k,r}) \leq \paran{n \pi r}^{-2} \errorprob/4$. Hence, by using a similar union bound as Lemma~\ref{lem:AMDfailureunbounded}, we get the desired bound.
\end{proof}

\begin{lemma}
	\label{lem:guessUnbounded}
	In $\pi'$, the adversary is able to guess the key of some user with probability at most $\errorprob/3$.
	\end{lemma}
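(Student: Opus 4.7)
The plan is to mirror the union bound structure used in Lemmas~\ref{lem:AMDfailureunbounded} and~\ref{lem:Silenceunbounded}. First I would fix a single \emph{guessing opportunity}: a specific round $r$, a specific ordered pair of neighbors $(i,j)$, and a choice of which of the two lines from the algorithm overview the adversary is trying to forge (line~\ref{alg:AliceKey} of Algorithm~\ref{aliceAlgo} when Alice is silent, or line~\ref{alg:BobKey} of Algorithm~\ref{bobAlgo} when Bob is silent). In either case the adversary is attempting to install a single encoded word that would be accepted by its recipient as authenticated with some honest party's key.

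Second, I would argue that under the private-channels assumption, at each such opportunity the only information the adversary has about the target key is its length. The honest key is drawn uniformly at random from $\{0,1\}^{\kappa_r}$ where $\kappa_r = 2\lceil \log(4n\pi r/\sqrt{\errorprob})\rceil$, and the adversary must commit to a single candidate word during that one time window (the channel is synchronous and each word has a fixed length in $r$). Therefore, conditioned on any history, the probability that the word the adversary injects contains the correct key is at most
\[
2^{-\kappa_r} \;\le\; \frac{\errorprob}{16\, n^2 \pi^2 r^2}.
\]

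Third, I would apply a union bound over the at most $2$ guessing opportunities per ordered pair per round, the fewer than $n^2$ ordered pairs, and all rounds $r\ge 1$:
\[
\sum_{r\ge 1}\sum_{i\ne j}\sum_{k=1}^{2}\frac{\errorprob}{16\,n^2\pi^2 r^2}
\;\le\; \frac{2\,\errorprob}{16\,\pi^2}\sum_{r\ge 1}\frac{1}{r^2}
\;=\;\frac{\errorprob}{48}\;\le\;\frac{\errorprob}{3},
\]
which yields the claim with room to spare.

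The main subtlety, and the step that would need the most care in the writeup, is justifying that the adversary really does face a blind guess at each opportunity rather than an adaptive attack that accumulates information across rounds. This follows from three facts: (i) keys are regenerated freshly in every round in which a party participates, so a past round's key gives no information about a future round's key; (ii) honest keys only ever appear on the channel inside an $\mathscr E$-encoding whose bits the adversary cannot read by the private-channel assumption; and (iii) on each opportunity the adversary commits to a single forged codeword before observing any honest response to it. Once these observations are in place, the union bound above is straightforward.
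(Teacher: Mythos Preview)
Your proposal is correct and follows essentially the same approach as the paper: compute the single-opportunity guessing probability $2^{-\kappa_r}\le \errorprob/(16\,n^2\pi^2 r^2)$ from the key length, then union-bound over pairs, words, and rounds exactly as in Lemma~\ref{lem:AMDfailureunbounded}. The only cosmetic difference is that the paper sums over all four words in a round ($k=1,\dots,4$) rather than your two forging opportunities, yielding $\errorprob/24$ instead of your $\errorprob/48$; your additional discussion of why the guess is blind is more explicit than the paper's, which simply appeals to the key being uniformly random.
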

\begin{proof}
	Recall that in round $r$, keys are of length $2 \left \lceil \log \paran{\frac{4n \pi r}{\sqrt{\errorprob}}} \right \rceil$. Each such key is generated uniformly at random from the set of all binary strings of this length. Thus, the probability of guessing this key is at most $2^{-2 \left \lceil \log \paran{\frac{4n \pi r}{\sqrt{\errorprob}}} \right \rceil} \leq \paran{n \pi r}^{-2} \errorprob/16$. Again, similar to Lemma~\ref{lem:AMDfailureunbounded}, for $1 \leq k \leq 4, 1 \leq i\neq j \leq n$ and $r \geq 1$, define $\xi_{i,j,k,r}$ to be the event that in round $r$, the adversary is able to guess the key in the $k^{th}$ word exchanged between users $i$ and $j$, so that $\Pr (\xi_{i,j,k,r}) \leq \paran{n \pi r}^{-2} \errorprob/16$. Hence, by using a similar union bound as Lemma~\ref{lem:AMDfailureunbounded}, we get the desired bound.
	\end{proof}

\begin{lemma}
\label{lem:failureTotal}
	With probability at least $1-\errorprob$, none of the failure events happen during a run of $\pi'$.
\end{lemma}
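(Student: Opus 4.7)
The plan is to combine the three preceding lemmas via a straightforward union bound. Concretely, I would let $A$ denote the event \failAMD, $B$ the event \failConversionSilence, and $C$ the event \failKeyInstallation, each taken over the entire run of $\pi'$ (i.e., aggregated over all rounds, all channels, and all four word positions within a round). The preceding three lemmas (Lemmas on AMD failure, Silence, and key guessing) give us $\Pr(A) \le \errorprob/3$, $\Pr(B) \le \errorprob/3$, and $\Pr(C) \le \errorprob/3$, already having absorbed the summations over $r \ge 1$, over pairs $(i,j)$ with $i \ne j$, and over word indices $k \in \{1,2,3,4\}$.

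Next, I would apply Boole's inequality to obtain
\[
\Pr(A \cup B \cup C) \;\le\; \Pr(A) + \Pr(B) + \Pr(C) \;\le\; \frac{\errorprob}{3} + \frac{\errorprob}{3} + \frac{\errorprob}{3} \;=\; \errorprob.
\]
Taking the complement yields $\Pr(\overline{A} \cap \overline{B} \cap \overline{C}) \ge 1 - \errorprob$, which is exactly the claim: with probability at least $1-\errorprob$, none of the three failure events occurs anywhere during the execution of $\pi'$.

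The only subtle point worth flagging explicitly is that the list of failure events identified in Section~\ref{sec:catastrophicFailure} is exhaustive, i.e., that conditioned on none of $A$, $B$, $C$ occurring, the algorithm behaves exactly as described in the scenario analysis (silent, progressive, or one of the five corrupted cases). This is not a probabilistic statement but a structural one: any deviation of the adversary's effect on the protocol from the enumerated behaviors must route through either (i) converting one AMD codeword into another (that is $A$), (ii) forging silence out of a word with sufficient bit alternation (that is $B$), or (iii) producing a correctly authenticated bogus key/message without knowing the target's key (that is $C$). I would briefly remind the reader of this before invoking the union bound, so that the lemma's conclusion---no failure events during the run---cleanly implies the correctness statements needed in the subsequent sections. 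The proof itself is a one-line union bound; no step here is a real obstacle.
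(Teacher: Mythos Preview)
Your proposal is correct and mirrors the paper's own proof almost exactly: both invoke the three preceding lemmas and apply a union bound to conclude $\Pr(A\cup B\cup C)\le \errorprob/3+\errorprob/3+\errorprob/3=\errorprob$. Your additional remark about the exhaustiveness of the three failure events is a reasonable clarification, though the paper defers that structural claim to the correctness section rather than treating it as part of this lemma.
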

\begin{proof}
	A run of protocol $\pi'$ fails if any of the three failure events described above happen. From Lemmas~\ref{lem:AMDfailureunbounded},~\ref{lem:Silenceunbounded} and~\ref{lem:guessUnbounded}, the total probability of failure is computed by using union bound over the three failure events, which gives $\errorprob/3 + \errorprob/3 + \errorprob/3 = \errorprob$. Hence, the run succeeds with probability at least $1 - \errorprob$.
\end{proof}

\subsection{Correctness}
All lemmas in this section assume that none of the failure events occur. Without loss of generality, we also assume that Alice and Bob generate their keys in every round. We use the phrase \terminatedPi to mean finished executing line~\ref{alg:BobTerminationState} of Algorithm~\ref{multipartyAlgo}.

\begin{lemma}
	\label{lem:goodEvents}
	Fix a round $r$ of $\pi'$. Let $W$ be any of the four sequences of $w_r$ bits onthe channel from Alice to Bob or the channel from Bob to Alice in this round and $k_A^{(r)}$, $k_B^{(r)}$ be the keys that Alice and Bob generate in this round. Then exactly one of the following hold.
	\begin{enumerate}
		\item $\silence{W}$ is true, in which case the sender on the channel was silent as well.
		\item $\decodeMsg{W}{t}=\mathcal{C} \neq \perp$, where $t$ is the time step at which the round $r$ began. Then,
			\begin{enumerate}
				\item If $\mathcal{C} = (\texttt{KEY?},k)$, then either Alice sent $W$ with key $k = k_A^{(r)}$, or Alice is silent and the adversary sent $W$.
				\item If $\mathcal{C} = (k',k_A^{(r)})$, then Bob has not \terminatedPi yet, and he sent $W$ with key $k' = k_B^{(r)}$.
				\item If $\mathcal{C} = (x,k_B^{(r)})$, then Alice sent $W$ with content $x = m^{(r)}$, where $m^{(r)}$ is Alice's message for Bob in round $r$.
				\item Otherwise, the adversary sent $W$ on a silent channel. 
			\end{enumerate} 
			\item Bob executed line~\ref{alg:bobNoise} of Algorithm~\ref{bobAlgo}.
			\item $W$ is the outcome of adversarial tampering on the channel.
	\end{enumerate}
\end{lemma}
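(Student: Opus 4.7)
The proof is a case analysis on the content of $W$, driven by the three tests a receiver performs (silence, decoding, and pattern matching on the decoded pair), and conditioned on none of the failure events of Section~\ref{sec:catastrophicFailure} occurring. The plan is to show that whichever outcome we observe in these tests, the corresponding item on the list is forced.

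First, suppose $\silence{W}$ holds. I claim the sender was silent. If instead the sender transmitted a (non-silence) word, then the adversary's bit flips converted that word to one that passes the silence test, which is precisely the \failConversionSilence event that we have assumed does not occur. Hence case 1 holds.

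Next, suppose $\silence{W}$ does not hold and $\decodeMsg{W}{t} = \mathcal{C} \neq \perp$. The first step is to show that whenever the legitimate sender transmitted \emph{some} valid codeword $W'$ in this slot, the decoded content of $W$ must equal the decoded content of $W'$: the AMD/ECC composition from Theorem~\ref{thm:amd} and Theorem~\ref{thm:ecc} guarantees that turning one valid codeword into a different valid codeword requires an \failAMD, which is excluded. With this in hand, I would split on the form of $\mathcal{C}$: if $\mathcal{C} = (\texttt{KEY?}, k)$ then by inspection of Algorithm~\ref{aliceAlgo} only Alice ever sends words of this form in round $r$, and she uses $k = k_A^{(r)}$; the only remaining possibility is that Alice was silent and the adversary fabricated $W$ from scratch on a silent channel (subcase 2a). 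If $\mathcal{C} = (k', k_A^{(r)})$, then the \failKeyInstallation assumption prevents the adversary from producing an AMD codeword authenticated by $k_A^{(r)}$, so Bob must have sent $W$; since he has not reached line~\ref{alg:BobTerminationState} of Algorithm~\ref{multipartyAlgo} (otherwise he would be silent on this channel), he is still participating and $k' = k_B^{(r)}$ (subcase 2b). The symmetric argument with $k_B^{(r)}$ establishes subcase 2c. In all remaining cases, no legitimate sender would have sent a codeword of this form, and the key-guessing assumption excludes the adversary corrupting a legitimate word into one of these shapes, so $W$ must have been fabricated by the adversary on a silent channel (subcase 2d).

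Finally, if $\silence{W}$ is false and $\decodeMsg{W}{t} = \perp$, then $W$ is neither silence nor a valid codeword. Either Bob deliberately emitted random noise through line~\ref{alg:bobNoise} of Algorithm~\ref{bobAlgo}, which is case 3, or the observed bits are the result of adversarial tampering (injection on a silent channel or perturbation of something that was sent), which is case 4. Mutual exclusivity follows because the conditions defining the four items partition the possibilities by the pair $(\silence{W}, \decodeMsg{W}{t})$ and, within case 2, by the form of $\mathcal{C}$.

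The main obstacle is the bookkeeping inside case 2: I need to pin down, for every one of the four candidate pattern types of $\mathcal{C}$, exactly which legitimate-sender-plus-adversary combinations could have produced it, and invoke the correct one of the three no-failure assumptions to eliminate every combination other than the one listed. The \failAMD assumption rules out corrupting one legitimate codeword into a different legitimate codeword; the \failKeyInstallation assumption rules out the adversary producing codewords carrying a key known only to Alice or Bob; and the \failConversionSilence assumption plays the analogous role in case 1. Once these three reductions are in place, the subcase taxonomy in (2a)--(2d) is forced.
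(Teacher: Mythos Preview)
Your proposal is correct and follows essentially the same approach as the paper: first dispose of the silence case via the \failConversionSilence assumption, then case-split on the decoded content $\mathcal{C}$ using the \failAMD and \failKeyInstallation assumptions to pin down who must have sent each pattern, and finally handle the residual noise case as either Bob's deliberate noise or adversarial tampering. One small slip: in subcase 2(d) you attribute the exclusion of ``corrupting a legitimate word into one of these shapes'' to the key-guessing assumption, but this is really the \failAMD assumption (which you already invoked correctly at the start of case 2); the argument is unaffected since your earlier general principle already covers it.
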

\begin{proof}
	In any given round $r$, whenever Alice or Bob send a word to the other, this word does not convert to silence because we assume that the failure event \failConversionSilence does not occur. Hence, if silence is received on the channel, then the sender on the channel must have been silent at that time. This proves part (1) of our lemma. \\
	
	If, however, a valid AMD codeword is received, which decodes into $\mathcal{C}$, the following cases are possible. Case 2(a) : If $\mathcal{C} = (\texttt{KEY?},k)$, then either Alice issued this key request to Bob with her key $k_A^{(r)}$ (line~\ref{step} of Algorithm~\ref{aliceAlgo}), or she was silent. In the former case, the adversary is unable to put a key $k \neq k_A^{(r)}$ in the codeword since the failure event \failAMD does not occur, and hence, $k = k_A^{(r)}$. In the latter case, since the channel is silent, the adversary must have issued the key request with some key $k$. Case 2(b) : If $\mathcal{C} = (k',k_A^{(r)})$, then since we assume that the failure event \failKeyInstallation does not occur, it must be the case that Bob sent $W$ and not the adversary. Thus, Bob must have included his own key $k_B^{(r)}$ to the codeword along with a copy of Alice's key (line~\ref{alg:BobKey} of Algoithm~\ref{bobAlgo}). This is only possible when Bob has not \terminatedPi yet. Case 2(c) : If $\mathcal{C} = (x,k_B^{(r)})$, then Alice must have sent $W$ since the the failure event \failKeyInstallation does not occur. Also, since we assume that the failure event \failAMD does not occur, the adversary would have not been able to convert $m^{(r)}$ into another message successfully, and hence, $W$ must contain the message that Alice has for Bob in this round. Case 2(d) : If the AMD codeword is neither of the above three cases, then it must be the case that the sender was on the channel at the time $W$ was on the channel, since \failAMD does not occur. Hence, the adversary must have sent $W$ on a silent channel.\\
	
	If neither silence nor an AMD codeword is received, then $W$ is noise. The only step in $\pi'$ where noise is intentionally put on the channel is when Bob has to inform Alice that he did not receive her message correctly (line~\ref{alg:bobNoise} of Algorithm~\ref{bobAlgo}). Thus, if Bob sent this noise, Case 3 of our lemma holds, else Case 4 must hold where the adversary has tampered with the bits on the channel so that $W$ becomes noise.
\end{proof}

\begin{lemma}
\label{lem:msgDel}
Assume Alice calls \sendMessage(m) in some round $r_1$ for some message $m$ and bit $b$.  Then the following hold: (1) Alice returns from \sendMessage(m) in some round $r_3 \geq r_1$; and (2) either Bob records the message $m$ (line~\ref{bob:record} of Algorithm 2) in exactly one round $r_2$ where $r_1 \leq r_2 \leq r_3$, or Bob does not record the message $m$ between rounds $r_1$ and $r_3$ because Bob \terminatedPi.
\end{lemma}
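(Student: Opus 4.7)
The plan is to split the lemma into its two assertions and handle each by careful case analysis, with Lemma~\ref{lem:goodEvents} as the workhorse: it lets us equate any observed silence on a channel with the sender actually being silent, and any successfully decoded word authenticated with the correct key with Alice's intended content.

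For part (1), Alice's outer \textbf{while} loop can only exit through line~\ref{alg:AliceReturnsBecauseSilence} (silence on $M_1$) or line~\ref{alg:AliceReturnsFinally} (silence on $M_2$), so the plan is to show that one of these reads must eventually be silent. Because the adversary's total budget $T$ is finite, only finitely many rounds on the Alice-to-Bob channel can be tampered with, and after some round $R$ every subsequent round on this channel is clean. I would then split cases on whether Bob has \terminatedPi by that point. If Bob has, then Bob is permanently silent on this channel, and by Lemma~\ref{lem:goodEvents}(1) Alice's next read of $M_1$ is silence, so she returns. Otherwise, in a clean round the progressive-round scenario from Section~\ref{sec:overview} runs without interference: Bob correctly receives the key request, sends his key, successfully decodes Alice's message, and stays silent during the fourth word, so Alice reads $M_2$ as silence and returns. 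Either way, $r_3$ is finite.

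For part (2), I would prove ``at most once'' and ``at least once unless terminated'' separately. The ``at most once'' direction follows from the parity bit: Alice fixes $b$ throughout a single \sendMessage call, and by Lemma~\ref{lem:goodEvents}(2c) any round in which Bob successfully decodes with the correct key $k_B^{(r)}$ has recovered content exactly $(m,b)$. Line~\ref{bob:record} fires only when $b' \neq \hat{b}$, and immediately sets $\hat{b} \gets b$, so every subsequent successful decode in the same call fails the guard and records nothing.

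For the ``at least once unless terminated'' direction, I would examine the round $r_3$ in which Alice returns. She returned after a silent read, which by Lemma~\ref{lem:goodEvents}(1) means the sender was silent. If Alice's silent read was on $M_1$, then inspecting Algorithm~\ref{bobAlgo} one sees that a still-running Bob emits nothing during the second word only if his own $M_1'$ read returned silence; but Alice did transmit her key request and \failConversionSilence does not occur by assumption, so this is impossible, forcing Bob to have \terminatedPi. If Alice's silent read was on $M_2$, then inspecting Algorithm~\ref{bobAlgo} shows Bob can be silent during the fourth word only along the branch that successfully decodes with $k'' = k_B$, so Lemma~\ref{lem:goodEvents}(2c) forces the recovered content to be $(m,b)$. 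A short induction on the index of the current \sendMessage call on this channel, using that Alice toggles $b$ exactly when she returns and Bob updates $\hat{b}$ exactly when he records, gives the invariant $\hat{b} = \lnot b$ at the start of round $r_1$. Hence $b' \neq \hat{b}$ in the first round in $[r_1, r_3]$ where Bob successfully decodes, and Bob records $m$ there.

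The main obstacle I expect is the parity-bit bookkeeping: the invariant $\hat{b} = \lnot b$ at the start of each call is what forces Bob to record the current message (rather than mistake it for a previously recorded duplicate), and proving it requires interleaving the ``at most once'' conclusion with the rule that Alice toggles $b$ exactly on successful return. A secondary subtlety is ruling out the silent-$M_1$ case for a still-running Bob, which leans on combining the silence model with the absence of \failConversionSilence to forbid Bob from misreading Alice's key-request word as silence.
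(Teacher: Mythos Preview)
Your proposal is correct and follows essentially the same approach as the paper: part (1) via the finite adversarial budget yielding an eventual clean round, and part (2) via the parity bit for ``at most once'' together with an analysis of the returning round $r_3$ for ``at least once.'' The only notable difference is that you make the parity-bit invariant $\hat{b} \neq b$ at the start of the call explicit (and flag it as the main obstacle), whereas the paper compresses this into the phrase ``now or previously recorded $m$''; your treatment is the more careful of the two.
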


\begin{proof}
We first show that if Alice calls \sendMessage(m) in round $r_1$, then there must exist some round $r_3 \geq r_1$ in which this call returns.  
Since the adversary's budget is finite, there must be some round, $r_4$, after which no bits are ever flipped.  
If the call returns before round $r_4$, then part (1) is proven, so we only consider the cases where the call extends past round $r_4$. In round $r_4+1$, after Alice has sent a key request she either hears silence, indicating that Bob has \terminatedPi, and the call returns, or she correctly receives Bob's key, and uses it to send $m$ which Bob correctly receives, and records (since the bit $b$ is different than the bit in the last message recorded).  Next, Alice hears silence from Bob.  This ends the call. Either way the call returns at the end of round $r_4 +1$.  Thus, in every case there is some round $r_3 \geq r_1$ in which the call to \sendMessage(m) returns. \\

We now prove part (2) of our lemma.  We first show that the message $m$ is recorded at most once by Bob in the rounds $r_1$ to $r_3$.  By Lemma~\ref{lem:goodEvents} (2(c)), the bit $b'$ received by Bob in line~\ref{alg:bobDecode} of Algorithm 2 must be the same as the bit $b$ sent by Alice from rounds $r_1$ to $r_3$.  Since this bit never changes, Bob will record a message at most once in rounds $r_1$ to $r_3$. \\

If Bob \terminatedPi\ before round $r_3$, then part (2) of our lemma statement does not require Bob to record the message $m$, and so that part or our lemma is trivially true.  Thus, for the remainder of the proof, we assume that Bob has not \terminatedPi\ before round $r_3$. \\ 

Consider round $r_3$ in which the call to \sendMessage(m) returns. Let $M_1$ be the string read by Alice on line~\ref{alg:AliceReadsBobsKey} of Algorithm~\ref{aliceAlgo}. Since Bob has not terminated, Lemma~\ref{lem:goodEvents} (2(b)) guarantees that Bob sent $M_1$, and therefore the call to Algorithm~\ref{bobAlgo} does not return on line~\ref{alg:AliceReturnsBecauseSilence}. Hence, it returns on line~\ref{alg:AliceReturnsFinally}.  Since this is the last round, $M_1$ must correctly decode to $(k_B, k_A)$ in Algorithm~\ref{aliceAlgo}. Thus, Alice sends Bob $m$ using $k_B$ and hears silence subsequently. Thus, Bob must have actually been silent at this time (Lemma~\ref{lem:goodEvents} (1)), which only happens if he has either now or previously recorded $m$. Hence, Bob must have recorded the message $m$ in some round $r_2 \leq r_3$.
\end{proof}

\begin{lemma}
\label{lem:correctnessRounds}
The following holds for $\pi'$. For any message $m$ that Bob records (line~\ref{bob:record} of Algorithm~\ref{bobAlgo}) in some round $r_2$, Alice started a call to \sendMessage \ with the message $m$, in a round $r_1 \leq r_2$ and returned from that call in some round $r_3 \geq r_2$. 	
\end{lemma}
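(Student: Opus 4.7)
The plan is to prove this lemma as the converse direction of Lemma~\ref{lem:msgDel}, using Lemma~\ref{lem:goodEvents} to trace the recorded message back to a genuine \sendMessage call by Alice. The hypothesis is that Bob records some message $m$ in line~\ref{bob:record} of Algorithm~\ref{bobAlgo} during some round $r_2$. The goal is to identify a specific call to \sendMessage that produced this recorded message, and to bound the round in which it started and the round in which it returned.

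First, I would analyze what must have happened in round $r_2$ from Bob's perspective. Since Bob reached line~\ref{bob:record}, the word $M_2'$ he read on the channel must have decoded successfully to a pair $(x'', k'')$ with $k'' = k_B^{(r_2)}$, where $k_B^{(r_2)}$ is Bob's freshly generated key for round $r_2$. By Case~2(c) of Lemma~\ref{lem:goodEvents}, this word must have been sent by Alice with content $x'' = m^{(r_2)}$, where $m^{(r_2)}$ is the message Alice had for Bob in round $r_2$. Note that Alice only transmits in this phase of a round when she is inside an active execution of \sendMessage, since line~\ref{alg:AliceKey} of Algorithm~\ref{aliceAlgo} is only reached from within the while-loop of a \sendMessage call. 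So there exists an active call $\Call{\sendMessage}{m'}$ whose argument $m'$ equals the content $x''$ that Alice transmitted, namely $m' = m^{(r_2)} = m$.

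Next, I would identify the round numbers $r_1$ and $r_3$. Let $r_1$ be the round in which this active call to \sendMessage was initiated. Since the call is active at the beginning of round $r_2$, we have $r_1 \leq r_2$. By part~(1) of Lemma~\ref{lem:msgDel}, this same call returns in some round $r_3 \geq r_1$. To show $r_3 \geq r_2$, observe that Alice is still executing inside the while-loop of the call during round $r_2$ (she executed line~\ref{alg:AliceKey} in that round), so the call has not yet returned by the start of round $r_2$. Hence $r_3 \geq r_2$, giving the desired chain $r_1 \leq r_2 \leq r_3$.

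The potentially subtle step — and the one I would be most careful about — is confirming that the message recorded by Bob on line~\ref{bob:record} is indeed the same message $m$ that was passed as the argument to the call, rather than some corrupted or confused value. This is exactly where Case~2(c) of Lemma~\ref{lem:goodEvents} does the crucial work: it rules out adversarial substitution (since \failAMD and \failKeyInstallation are assumed not to occur) and pins down the transmitted content to be precisely Alice's intended message for round $r_2$, which in turn is the argument of the active \sendMessage call by the logic of Algorithm~\ref{aliceAlgo}. Everything else in the argument is straightforward bookkeeping on the round indices.
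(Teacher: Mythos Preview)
Your proposal is correct and follows essentially the same approach as the paper: both arguments invoke Lemma~\ref{lem:goodEvents}~(2(c)) to conclude that the decoded content $(m,b)$ with key $k_B$ must have been sent by Alice from within an active \sendMessage($m$) call, and then use Lemma~\ref{lem:msgDel}(1) together with the observation that Alice is still inside the call during round $r_2$ to obtain $r_1 \le r_2 \le r_3$. Your write-up is in fact slightly more careful than the paper's (you correctly refer to $M_2'$ rather than $M_1'$, and you explicitly note that line~\ref{alg:AliceKey} can only be reached from inside the while-loop).
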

\begin{proof}
Consider the round $r_2$ in which the Bob records a message $m$ (line~\ref{bob:record} of Algorithm 2). In round $r_2$, in line~\ref{alg:bobDecode} of Algorithm~\ref{bobAlgo}, let $\decodeMsg{M_1'}{t} = ((m,b),k_B)$. Thus, by Lemma~\ref{lem:goodEvents} (2(c)), Alice must have sent $\encodedMsg{(m,b)}{k_B}{t}$, during the transmission of $M_1'$.  Hence, Alice must be in a call to \sendMessage \ with the message $m$, and this call must have begun in some round $r_1 \leq r_2$. \\

Finally, we know by Lemma~\ref{lem:msgDel}(1) that every call to \sendMessage \ ends in some round $r_3$.  Since, by the above, Alice is in the call during round $r_2$, it must be the case that Alice returns from the call in some round $r_3 \geq r_2$.
\end{proof}

\begin{lemma}
	\label{lem:correctness}
	Algorithm $\pi'$ terminates with a correct simulation of an asynchronous run of $\pi$. 
\end{lemma}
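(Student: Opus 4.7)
The plan is to prove termination and correctness in that order, leveraging Lemmas~\ref{lem:goodEvents}, \ref{lem:msgDel}, and \ref{lem:correctnessRounds}, all of which we can invoke because by Lemma~\ref{lem:failureTotal} none of the failure events occur (which is the event we condition on for this lemma). The overall strategy is to set up a simulation relation between the state of each $\pi_u$ maintained by user $u$ in Algorithm~\ref{multipartyAlgo} and a canonical asynchronous execution of $\pi$, then argue that every output action in the simulated run corresponds to some successful \sendMessage call and vice versa.

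First, I would establish termination. Since the adversary has a finite budget $T$, there is a last round $r^{*}$ after which no bits are flipped on any channel. After $r^{*}$, in any call to \sendMessage Alice either immediately observes silence from Bob (who has already terminated in $\pi$, so his \receiveMessage loop no longer runs on that channel) and returns via line~\ref{alg:AliceReturnsBecauseSilence}, or she completes a clean exchange in which Bob records her message and she returns via line~\ref{alg:AliceReturnsFinally}. Combining this with Lemma~\ref{lem:msgDel}(1), every \sendMessage call returns in finitely many rounds. Inductively on the length of the executed path through $\pi_u$, each user takes only finitely many transitions in Algorithm~\ref{multipartyAlgo}, so each user eventually reaches a termination state of $\pi_u$ and executes line~\ref{alg:BobTerminationState}, and the residual scheduled \sendMessage calls drain in finite time.

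Next, for correctness, I would prove by joint induction over the transitions taken at all users that after each transition, the tuple of local states of $(\pi_u)_{u}$ together with the multiset of ``in-flight'' messages (those whose \sendMessage call has started but whose Bob-side record has not yet occurred, plus messages in $S$ not yet processed) forms a reachable global configuration of an asynchronous run of $\pi$. The key facts that sustain the induction are: (a) Lemma~\ref{lem:correctnessRounds}, which shows that every message $m$ recorded by Bob in round $r_2$ was truly sent by Alice via a \sendMessage$(m)$ call begun in some round $r_1 \le r_2$ — so input actions in the simulation arise only from legitimate output actions of $\pi$; and (b) Lemma~\ref{lem:msgDel}(2), which shows that each such \sendMessage$(m)$ call causes Bob to record $m$ exactly once (unless Bob has already terminated in $\pi$, which by the automaton semantics means $u$ would no longer be expected to deliver any further inputs to Bob in a valid asynchronous run). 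The parity bit $b$ in Algorithm~\ref{aliceAlgo} guarantees that consecutive distinct \sendMessage calls from the same Alice are never conflated by Bob, so the ordering of recorded messages on each directed edge matches the order in which Alice issued them, as required for an asynchronous execution.

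The main obstacle I anticipate is handling the gap between ``Alice has scheduled a \sendMessage call'' and ``Bob has recorded the corresponding message'': during this interval, Alice may have advanced past the output action in $\pi_u$, while Bob has not yet received the input. I would address this by explicitly modeling these pending calls as the asynchronous message buffer in the simulation relation, so that advancing Alice past her output corresponds to placing the message into the channel buffer, and Bob's eventual record corresponds to delivery. Lemma~\ref{lem:msgDel}(1) guarantees the buffer is drained in finite time, fulfilling the ``eventual delivery'' requirement for an asynchronous execution of $\pi$, and combined with termination established above this yields the claim.
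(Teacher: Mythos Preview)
Your proposal is correct and follows essentially the same approach as the paper: both reduce correctness to the pair of facts supplied by Lemma~\ref{lem:msgDel} and Lemma~\ref{lem:correctnessRounds}, and both model the pending \sendMessage\ calls as the contents of the asynchronous channel buffer. The paper phrases this as ``$\pi'$ correctly simulates a FIFO message channel from $u$ to $v$'' (invoking the channel abstraction of~\cite{lynch1996distributed}) and enumerates three facts matching your (a), (b), and the output-action/queue correspondence, whereas you build the simulation relation explicitly; the underlying argument is the same. One small remark: the FIFO ordering of recorded messages on an edge comes from the queuing of \sendMessage\ calls in Algorithm~\ref{multipartyAlgo}, not from the parity bit itself, which only prevents Bob from double-recording a resend.
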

\begin{proof}

We show that for every pair of users $u$ and $v$, protocol $\pi'$ correctly simulates a FIFO message channel(see~\cite{lynch1996distributed} Chapter 14.1.2) from $u$ to $v$ during the simulation of $\pi$.  Then a
direct induction shows that for each processor $u$, $\pi'_u$ simulates $\pi_u$ correctly.  \\

Fix an arbitrary channel from $u$ to $v$ in the network.   Let $Q_{u,v}$ be the queue of $\sendMessage$ procedures that are maintained in $\pi'$ by  $u$ of messages to send to  $v$.  

We require the following facts:
\begin{enumerate}
	\item In any time step, there is a transition in $\pi_u$ across a transition relation with an output action to send a message $m$ to user $v$, if and only if the procedure $\sendMessage$ for message $m$ to user $v$ is pushed on the queue $Q_{u,v}$ in that time step.
	\item Every $\sendMessage$ procedure for message $m$ on $Q_{u,v}$ will eventually start at some round $r_1$ and end in some round $r_3$.  Moreover, user $v$ transitions across an input transition relation for message $m$ from user $u$ at most once in some round $r_2$, $r_1 \leq r_2 \leq r_3$.  Moreover, if there is no such input transition relation, than $\pi_u$ has entered a termination state before round $r_3$.
\tom{The notion of ``transitioning across an input transition relation,'' hasn't been discussed before or properly defined.  Transition relations are mentioned on page 13 as coming from the definition of I/O automata, but not explained.}
	\item For any transition along a transition relation in $\pi_v$ with an input action, in some round $r_2$, user $u$ started a call to $\sendMessage$ with the message $m$, in a round $r_1 \leq r_2$ and returned from that call in some round $r_3 \geq r_2$.
\end{enumerate}

Fact (1) follows directly from Algorithm~\ref{multipartyAlgo} steps 3-6.  The first sentence of Fact (2) follows by induction and Lemma~\ref{lem:msgDel}, and the remainder of the fact follows directly from Lemma~\ref{lem:msgDel}.  Fact (3) follows from Lemma~\ref{lem:correctnessRounds}.\\

Together, the facts show that no matter what the actions of the adversary, the protocol $\pi'$ correctly simulates a FIFO message channel from user u to user v.  In particular, we have: 1) when a transition is taken in $\pi_u$ with output action to send message $m$ to user $v$, this message is put on a queue; 2) all transitions in $\pi_v$ with an input action to receive a message $m$ from user $u$ are associated with the removal of message $m$ from the queue; and 3) all messages are eventually removed from the queue, triggering transitions across transition relations with input actions in $\pi_v$, unless $\pi_v$ is already in a termination state.
\end{proof}

\subsection{Resource Costs}
We now compute the expected number of bits sent and the latency of $\pi'$. Let $\tau(r)$ be the time step at which the $r^{th}$-round of our algorithm begins. Clearly, $\tau(1) = 1$.   

\begin{lemma}
	\label{lem:taubound}
	In $\pi'$, round $r \geq 1$ begins at time step $\tau(r) = \Theta\paran{r \log \paran{\frac{nr}{\errorprob}}}$.
	\end{lemma}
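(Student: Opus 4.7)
The plan is to express $\tau(r)$ as a simple sum of the lengths of the preceding rounds and then evaluate that sum asymptotically. By construction, round $i$ consists of exactly four words of length $w_i = 300\lceil \log(ni/\errorprob)\rceil$ (see Equation~\eqref{eq:wordLength}), so every round has length $4 w_i = \Theta(\log(ni/\errorprob))$. Since the global clock is synchronous and rounds are executed back to back, I would simply write
\[
\tau(r) \;=\; 1 + \sum_{i=1}^{r-1} 4 w_i \;=\; 1 + \Theta\!\paran{\sum_{i=1}^{r-1} \log\paran{\tfrac{ni}{\errorprob}}}.
\]

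Next I would evaluate the sum. Splitting the logarithm gives
\[
\sum_{i=1}^{r-1}\log\paran{\tfrac{ni}{\errorprob}} \;=\; (r-1)\log\paran{\tfrac{n}{\errorprob}} + \sum_{i=1}^{r-1}\log i \;=\; (r-1)\log\paran{\tfrac{n}{\errorprob}} + \log\bigl((r-1)!\bigr).
\]
Stirling's approximation (or the elementary bounds $\int_1^{r-1}\log x\,dx \le \sum_{i=1}^{r-1}\log i \le \int_1^{r}\log x\,dx$) yields $\log((r-1)!) = \Theta(r\log r)$. Combining, the two terms add to $\Theta\paran{r\log(n/\errorprob) + r\log r} = \Theta\paran{r\log(nr/\errorprob)}$, which gives the stated bound.

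I do not anticipate a major obstacle: the argument is essentially a direct summation, and the only subtlety is being careful that both bounds of the $\Theta$ go through. The upper bound is immediate since $\log(ni/\errorprob)\le \log(nr/\errorprob)$ for each $i\le r$, giving $\sum_{i=1}^{r-1}\log(ni/\errorprob)\le (r-1)\log(nr/\errorprob)$. The matching lower bound requires observing that at least a constant fraction of the indices $i$ (say, $i\ge r/2$) satisfy $\log(ni/\errorprob) \ge \log(nr/\errorprob) - 1$, so the sum is also $\Omega(r\log(nr/\errorprob))$. Together these yield $\tau(r) = \Theta\paran{r\log(nr/\errorprob)}$ as claimed.
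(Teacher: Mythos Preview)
Your proposal is correct and follows essentially the same approach as the paper: write $\tau(r)=1+4\sum_{i=1}^{r-1}w_i$ from the recurrence $\tau(r)=\tau(r-1)+4w_{r-1}$, plug in $w_i=\Theta(\log(ni/\errorprob))$, and evaluate the sum. If anything, you are more careful than the paper about justifying the matching lower bound for the $\Theta$.
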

\begin{proof}
For all $r \geq 1$, note that round $r+1$ begins as soon as the number of time steps corresponding to four words of the round $r$ have passed. Hence, we can compute $\tau(r)$ using the recurrence $\tau(r) = \tau(r-1) + 4w_{r-1}$, where $\tau(1) = 1$. This gives $\tau(r) = \tau(1) + 4\sum_{i=1}^{r-1} w_i$. Now, from equation~(\ref{eq:wordLength}), using $w_i \leq C_1 \log \paran{\frac{ni}{\errorprob}} + C_2$, we get $\tau(r) \leq 1 + 4\sum_{i=1}^{r-1} \paran{C_1 \log \paran{\frac{ni}{\errorprob}} + C_2}$, which gives $\tau(r) = \Theta \paran{r \log \paran{\frac{nr}{\errorprob}}}$.
\end{proof}

\begin{lemma}
	\label{lem:numRoundsTil}
	If $\tau(x) \leq z$ for some $x \geq 1$ and $z \geq 1$, then $x= \mathcal{O}(z/ \log \paran{\frac{nz}{\errorprob}})$.
\end{lemma}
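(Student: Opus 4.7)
The plan is to invert the asymptotic relation $\tau(r) = \Theta\paran{r \log \paran{nr/\errorprob}}$ established in Lemma~\ref{lem:taubound}. First I would extract from that lemma a positive constant $c$ such that $\tau(r) \ge c \cdot r \log \paran{nr/\errorprob}$ for all $r \ge 1$. Combining this with the hypothesis $\tau(x) \le z$ then yields the basic inequality $x \log \paran{nx/\errorprob} \le z/c$, and the remaining task is to deduce from this that $x \le K \cdot z / \log \paran{nz/\errorprob}$ for some absolute constant $K$.

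To perform the inversion I would argue by contradiction, leveraging the monotonicity of the map $y \mapsto y \log \paran{ny/\errorprob}$. Suppose that $x > K \cdot z / \log \paran{nz/\errorprob}$ for a constant $K$ to be chosen large. Taking logarithms on both sides gives $\log \paran{nx/\errorprob} > \log \paran{nz/\errorprob} + \log K - \log \log \paran{nz/\errorprob}$, and for $nz/\errorprob$ sufficiently large the term $\log K - \log \log \paran{nz/\errorprob}$ is at least $-\tfrac{1}{2} \log \paran{nz/\errorprob}$, so $\log \paran{nx/\errorprob} \ge \tfrac{1}{2} \log \paran{nz/\errorprob}$. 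Multiplying the two strict lower bounds then gives $x \log \paran{nx/\errorprob} > (K/2) z$, contradicting the basic inequality once $K$ is chosen larger than $2/c$.

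The main annoyance I anticipate is keeping the $\log \log$ correction under control and handling the boundary regime where $nz/\errorprob$ is not large. The latter is harmless, however: from the crude bound $\log \paran{nx/\errorprob} \ge \log \paran{n/\errorprob} = \Omega(1)$ (since $n \ge 2$ and $\errorprob \le 1$) we already have $x = \mathcal{O}(z)$ from the basic inequality, and for bounded $nz/\errorprob$ the ratio $z / \log \paran{nz/\errorprob}$ is itself $\Theta(z)$, so any finite starting regime can be absorbed into the $\mathcal{O}(\cdot)$ constant. The only substantive step is thus the asymptotic inversion sketched above.
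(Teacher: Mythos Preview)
Your proposal is correct and follows essentially the same approach as the paper: both extract from Lemma~\ref{lem:taubound} the basic inequality $x \log\paran{nx/\errorprob} = \mathcal{O}(z)$ and then argue that $\log\paran{nx/\errorprob}$ and $\log\paran{nz/\errorprob}$ are within a constant factor of each other. The only cosmetic difference is that the paper substitutes directly (showing $\log\paran{nz/\errorprob} \le C' \log\paran{nx/\errorprob}$ and plugging back in), whereas you frame the same comparison as a proof by contradiction and are somewhat more careful about the boundary regime.
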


\begin{proof}
	We prove the bound on $x$ for the case that $\tau(x) = z$.  By Lemma~\ref{lem:taubound}, we know that $\tau(x) = C \paran{x \log \paran{\frac{nx}{\errorprob}}}$, for some constant $C$.  Thus, $z = C x \log \paran{\frac{nx}{\errorprob}}$, and we get (*) $x \leq z / (C \log \paran{\frac{nx}{\errorprob}})$.
	
Note that $z = C x \log \paran{\frac{nx}{\errorprob}} \leq C x \log x$.  Taking logs of both sides, we get that $\log z \leq C' \log x$ for some constant $C'$, which implies that $\log \paran{\frac{nz}{\errorprob}} \leq C' \log \paran{\frac{nx}{\errorprob}}$.  Now plugging this back into (*), we get that $x \leq C'' z/ \log \paran{\frac{nz}{\errorprob}}$ for some constant $C''$.
\end{proof}

\begin{lemma}
	\label{lem:resourceCosts1}
	If $\pi'$ succeeds, then it has the following resource costs.
	\begin{itemize}
	\item The number of bits sent is $\mathcal{O} \paran{L\log \paran{\frac{n L}{\errorprob}} + T}$.
	\item The latency is $\mathcal{O} \paran{\underset{p}{\max} \left \{ \Lambda_p\log \paran{\frac{n(L+T)}{\errorprob}} + T_p \right \}}$, where 
          the maximum is taken over all communication paths, $p$, in the asynchronous simulation of $\pi$, $\Lambda_p$ is the latency of $p$, and $T_p$ is the     total number of bits flipped by the adversary on edges in $p$.
	\end{itemize}
\end{lemma}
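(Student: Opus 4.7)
The plan is to partition every channel-round of the execution into three categories and account for each contribution separately. A \emph{silent} channel-round (neither endpoint has anything queued) contributes zero bits. A \emph{successful} channel-round costs $O(w_r)$ bits and delivers exactly one bit of message content. A \emph{corrupted} channel-round costs $O(w_r)$ bits but delivers nothing; the crux of the argument is that each such round also forces the adversary to spend $\Omega(w_r)$ flipped bits on that channel. Once this is established, since there are exactly $L$ successful channel-rounds (one per bit of the transcript) and the corrupted channel-rounds can be amortized against $T$, the bit bound will drop out.

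The key technical step is the claim that each corrupted channel-round occurring in global round $r$ requires $\Omega(w_r)$ bit flips by the adversary on that channel. I would verify this by walking through the two underlying scenarios in the case analysis of Section~\ref{sec:overview}. First, if the adversary tampers with a word that Alice or Bob actually transmitted, then since the composed ECC/AMD code corrects any change affecting fewer than a $1/3$ fraction of the bits and \failAMD is ruled out, the adversary must flip more than a third of the ECC bits of the word, giving $\Omega(w_r)$ flips. Second, if the adversary fabricates traffic on a silent channel, then by the silence model the first silent bit is free but every subsequent alternation costs one flip; since a word is declared silent unless it has at least $w_r/3$ alternations (and \failConversionSilence is ruled out), the adversary must pay $\Omega(w_r)$ flips to produce a non-silent-looking word. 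Summing over all corrupted channel-rounds then yields $\sum_{\text{corrupted }(c,r)} w_r \le O(T)$, which bounds the bit contribution of corrupted rounds by $O(T)$.

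For the successful rounds I need an upper bound on the largest round index $R$ used. The longest chain of dependent messages in $\pi$ has length at most $L$, so without adversarial interference the algorithm would terminate by round $O(L)$; each corrupted round on a given path adds at most one round of delay to that path, and the total corrupted rounds on any path are at most $O(T/\log(n/\errorprob))$ by the $\Omega(w_r)$ lower bound just established. Hence $R = O(L+T)$, and by Lemma~\ref{lem:taubound}, $w_R = O(\log(n(L+T)/\errorprob))$. The total bits from the $L$ successful channel-rounds is therefore $L \cdot O(w_R)$, and using $\log(1+x) \le x$,
\[
L\,\log\!\paran{\frac{n(L+T)}{\errorprob}} \;\le\; L\,\log\!\paran{\frac{nL}{\errorprob}} + L\,\log\!\paran{1 + \frac{T}{L}} \;\le\; L\,\log\!\paran{\frac{nL}{\errorprob}} + T,
\]
which gives the claimed bound $O(L \log(nL/\errorprob) + T)$. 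For the latency, the same accounting restricted to a single path $p$ yields $\Lambda_p$ successful rounds each of time $O(w_R) = O(\log(n(L+T)/\errorprob))$ and corrupted rounds on $p$ contributing $\sum O(w_r) = O(T_p)$ total; maximizing over $p$ gives the stated latency bound. The main obstacle is the $\Omega(w_r)$ lower bound on adversarial cost per corrupted round; the rest is routine bookkeeping.
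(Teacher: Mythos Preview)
Your argument is correct and rests on the same core observation as the paper's: any corrupted round in global round $r$ forces the adversary to spend $\Omega(w_r)$ bit flips (you are actually more careful than the paper here, since you separately handle the fabricated-on-silence cases via the alternation cost, whereas the paper only mentions the ECC fraction). The difference is in bookkeeping. The paper first bounds the \emph{number} $x$ of corrupted rounds by noting $\tau(x)\le T/C$ and invoking Lemma~\ref{lem:numRoundsTil} to get $x=O\!\bigl(T/\log(n(T+1)/\errorprob)\bigr)$, then sets the total round count to $r=L+x$ and reads off the bit cost as $\tau(r)$ via Lemma~\ref{lem:taubound}. You instead split the bit cost directly into ``corrupted'' and ``successful'' buckets, charge the corrupted bucket to $T$ by amortization, and bound the successful bucket by $L\cdot w_R$ after arguing $R=O(L+T)$; this sidesteps Lemma~\ref{lem:numRoundsTil} for the bit count (though you implicitly use the same idea to bound $R$). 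Both routes finish with the identical inequality $L\log\!\bigl(n(L+T)/\errorprob\bigr)\le L\log\!\bigl(nL/\errorprob\bigr)+T$, and the latency argument is the same per-path restriction in each.
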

\begin{proof}

To bound the number of bits sent, we assume pessimistically that in every round of $\pi'$, there is an attempt to send exactly one message.  This maximizes the number of bits sent since word sizes increase with time.  

Since each word in $\pi'$ is ECC encoded, the adversary must flip a constant fraction of the bits to successfully corrupt the word, and thereby compromise the round.  Let $x$ be the number of rounds in which some word was successfully corrupted by the adversary. Since the length of the words increases with successive rounds, $T$ must at least be a constant, $C$, times the number of bits in the first $x$ rounds of $\pi'$, and hence, we must have $\tau(x) \leq (T+1)/C$. 

By Lemma~\ref{lem:numRoundsTil}, we know that $x = \mathcal{O}\paran{\frac{T}{\log \paran{n(T+1)/\errorprob}}}$.  Thus, since $\pi'$ requires $L$ progressive rounds, the total number of rounds is $r = L + \mathcal{O}\paran{\frac{T}{\log \paran{n(T+1)/\errorprob}}}$.

Hence the total number of bits sent is at most $\tau(r)$.  By Lemma~\ref{lem:taubound},

\begin{eqnarray*}
	\tau(r) & = & \mathcal{O}\paran{\paran{L + \frac{T}{\log \paran{n(T+1)/\errorprob}}} \log \paran{\frac{n}{\errorprob}\paran{L + \frac{T}{\log \paran{n(T+1)/\errorprob}}}}} \\
	& \leq & \mathcal{O}\paran{\paran{L + \frac{T}{\log \paran{n(T+1)/\errorprob}}} \log \frac{n(L+T)}{\errorprob}} \\
        & = & \mathcal{O} \paran{L\log \paran{\frac{n (L+T)}{\errorprob}} + T} \\
	& = & \mathcal{O} \paran{L\log \paran{\frac{n L}{\errorprob}} + T}.
\end{eqnarray*}

The second line above follows from the fact that $\frac{T}{\log \paran{n(T+1)/\errorprob}} \leq T$, and the third line from the fact that if $L = O(T)$, $\log (L+T) = O(\log T)$.  The final line above follows fro the fact that $\log(L + T) = \log(L) + \log(1 + T/L) \le \log(L) + T/L$, and hence
$L \log(L+T) \le L \log(L) + T$.
This bounds the number of bits sent.

To bound the latency, we note that the argument above holds for any 
communication path $p$ in the asynchronous simulation of $\pi$.
For the $p$ which achieves the maximum, it follows by induction that all other required messages will have already
been received by the time they are needed, and so $p$ determines the overall latency.
\tom{Added a sentence to try to flesh this out a bit better.}
\end{proof}

Now we are ready to complete the proof of Theorem~\ref{thm:mainUnbounded}, in the case when the messages of $\pi$ are all single bits.

\begin{proof}[Proof of Theorem~\ref{thm:mainUnbounded}, special case]
By Lemma~\ref{lem:failureTotal}, no failure event happens with probability at least $1-\errorprob$, and by Lemma~\ref{lem:correctness}, in such a situation, $\pi'$ terminates with an asynchronous simulation of $\pi$.  Upon correct termination, the resource cost bounds hold by Lemma~\ref{lem:resourceCosts1}. 
\end{proof}

%%%%%%%%%%%%%%%%%%%%%%%%%%%%%%%

%%%%%%%%%%%%%%%%%%%%

\section{Simulating Protocols with Messages of Arbitrary Length}
\label{sec:alpha}

\jared{These next two paragraphs should be revised to match more closely with the model from H.'s paper giving an algorithm that works for variable message lengths.}

So far, we have assumed that all messages sent in the protocol $\pi$ are single bit messages. In fact $\pi$ could send messages of a fixed constant size, or even of size $\log n$ and our results would still hold, provided the initial word length was chosen long enough to be able to encode the messages in $\pi$, the limiting factor here being that the encoding function $\encodedMsg{}{}{t}$ can only encode strings up to some length. In this section, we show how to go beyond this restriction and modify our algorithm to be able to simulate protocols $\pi$ of arbitrary and variable message length, and we show that the cost of the simulation scales well relative to the average message length in $\pi$. \\

Before we proceed, we need to discuss what it means for $\pi$ to have variable message length. After all, even in the noise-free setting, if Alice is to send Bob a previously unspecified number of bits, then how is Bob to know when to stop reading bits? \varsha{More philosophical discussion needed here: Why not End-of-Message symbol which is implicitly standard in the study of asynch algs. Need to know how such a symbol would be corrupted.  If uncorruptible then makes everything too easy, uninteresting. Can hide info in that etc. }

Here we assume that the messages from Alice to Bob in $\pi$ come from a fixed prefix-free language $\pfl$ = $\pfl_{A,B}$. Recall that a language $\pfl$ is prefix-free if for any pair of strings $s$ and $s'$ in $\pfl$ neither is a prefix of the other.  Note that different prefix-free languages may be used between each pair of players.

\varsha{More discussion? Examples of PFLs?}

\subsection{Algorithm}

The main simulating algorithm $\pi'$ (Algorithm~\ref{multipartyAlgo}) remains unchanged in this setting. Only the sending and receiving algorithms need to change to reflect the fact that the message to be sent may be longer than a single call to $\encodedMsg{}{}{t}$ can support. In Algorithms~\ref{aliceAlgo-vl} and~\ref{bobAlgo-vl} below, we highlight the necessary changes in red.

\begin{algorithm}[H]
	\caption{Message exchange algorithm for the sender.}
	\label{aliceAlgo-vl}
	\begin{algorithmic}[1]
		\Procedure{\sendMessage}{$m$}
		\LineComment{\hlt{$b$ is a persistent variable for the parity bit. On the first call to \sendMessage, $b$ is set}}
		\LineComment{\hlt{to $0$.  On subsequent calls it is whatever it was set to on the previous call.}}
	\State \hlt{$j \gets 0$}
		\While{\hlt{$j < |m|$}
			\State $(w_r,\kappa_t) \gets$ \Call{Word-Params}{$t$} \Comment{$t$ is the current time step.} 
			\State Generate random key $k_A$ of length $\kappa_t$.
			\State \label{step-vl} Send $\encodedMsg{\texttt{KEY?}}{k_A}{t}$.
			\State \label{alg:AliceReadsBobsKey-vl} $M_1 \gets w_r$ bits on the channel from the receiver.
			\If{$\silence{M_1}$}		\Comment{Assume the receiver has already terminated.}
				\State $b \gets \lnot b$
				\State \label{alg:AliceReturnsBecauseSilence-vl} Stay silent for $2w_r$ time steps and \Return
			\Else 
				\State $(x,k) \gets \decodeMsg{M_1}{t}$
				\If {$k \neq k_A$} 
				\State Stay silent for $2w_r$ time steps.
			\Else
				\State $k_B \gets x$
\State \hlt{$M \gets m[j,j+\kappa_t]$ \Comment{Next $\kappa_t$ bits of $m$} }
				\State \label{alg:AliceKey-vl} \hlt{Send $\encodedMsg{(M, b)} {k_B}{t}$.} 
				\State $M_2 \gets w_r$ bits on the channel from the receiver.
				\If{$\silence{M_2}$}
					\State \hlt{$j \gets j+\kappa_t$}
					\State \hlt{$b \gets \lnot b$}	
				\EndIf
			\EndIf
			\EndIf
			\State \label{alg:AliceReturnsFinally-vl} \hlt{\Return}
		\EndWhile
		\EndProcedure
	\end{algorithmic}
\end{algorithm} 

\begin{algorithm}[H]
	\caption{Message exchange algorithm for the receiver.}
	\label{bobAlgo-vl}
	\begin{algorithmic}[1]
		\Procedure{\receiveMessage}{\ } 
		\LineComment{\hlt{$\hat{b}, \mu, \lambda$ are persistent variables for the parity bit, the partially received message, and length}}
		\LineComment{\hlt{of the recorded message, respectively. On the first call to \receiveMessage, $\hat{b} \gets 0$,}}
		\LineComment{\hlt{$ \mu \gets \varnothing, \lambda \gets 0$. On subsequent calls these variables are whatever they were set to on the}}
		\LineComment{\hlt{previous call.}}
		\State \label{Bobstep-vl} $(w_r,\kappa_t) \gets$ \Call{Word-Params}{$t$}	\Comment{$t$ is the current time step.} 
		\State $M_1' \gets w_r\ $bits on the channel from the sender. 
		\If {$\silence{M_1'}$}
			\State Stay silent for $3w_r$ time steps.
		\Else
			\State $(x',k') \gets \decodeMsg{M_1'}{t}$ 
			\If {$x' \neq \texttt{KEY?}$}
				\State Send noise for $w_r$ time steps.
				\State Stay silent for $2w_r$ time steps.
			\Else
				\State $k_A \gets k'$. 
				\State Generate random key $k_B$ of length $\kappa_t$.
				\State \label{alg:BobKey-vl} Send $\encodedMsg{k_B}{k_A}{t}$.
				\State $M_2' \gets w_r$ bits on the channel from the sender.
				\State $(x'',k'') \gets \decodeMsg{M_2'}{t}$
				\If{$k'' \neq k_B$}
					\State \label{alg:bobNoise-vl} Send noise for $w_r$ time steps.
				\Else
					\State \label{alg:bobDecode-vl} $(M', b') \gets x''$
					\If{\hlt{$\mu = \varnothing$}}
						\If{\hlt{$b' \neq \hat{b}$}}
							\State \hlt{$\lambda \gets |M'|$}
							\State \hlt{$\mu \gets M'$}
						\EndIf
					\Else
						\If{\hlt{$b' \neq \hat{b}$}}
							\State \hlt{$\lambda \gets |M'|$}
							\State \hlt{Append $M'$ to $\mu$.}
						\Else 
							\State \hlt{Replace last $\lambda$ bits of $\mu$ with $M'$.}
							\State \hlt{$\lambda \gets |M'|$}}
						\EndIf
					\EndIf
					\If{\hlt{$\mu$ is a completed message of $\pfl$}}
						\State \label{bob:record-vl}\hlt{ Record the message $\mu$ from the sender.}
						\State \hlt{$\mu \gets \varnothing$}
						\State \hlt{$\hat{b} \gets b'$}
					\EndIf
					\State Stay silent for $w_r$ time steps.
				\EndIf
			\EndIf
		\EndIf
		\EndProcedure
	\end{algorithmic}
\end{algorithm}

\section{Analysis}
\label{sec:unboundedTanalArbLength}

\subsection{Correctness}

\begin{lemma}
\label{lem:msgDel-vl}
Assume Alice calls \sendMessage(m) in some round $r_1$ for some message $m$ and bit $b$.  Then the following hold: (1) Alice returns from \sendMessage(m) in some round $r_3 \geq r_1$; and (2) either Bob records the message $m$ (line~\ref{bob:record-vl} of Algorithm~\ref{bobAlgo-vl}) in exactly one round $r_2$ where $r_1 \leq r_2 \leq r_3$, or Bob does not record the message $m$ between rounds $r_1$ and $r_3$ because he \terminatedPi.
\end{lemma}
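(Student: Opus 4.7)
The plan is to adapt the proof of Lemma~\ref{lem:msgDel} to the chunked send-and-reassemble mechanism of Algorithms~\ref{aliceAlgo-vl} and~\ref{bobAlgo-vl}. Part~(1) will essentially reuse the original argument. Part~(2) requires a more delicate induction over chunks, using both the parity bit $\paritybit$ and the prefix-free property of $\pfl$.

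For part~(1), let $r_4$ be a round after which the adversary flips no further bits, which exists because $T$ is finite. In any round past $r_4$, either Bob has \terminatedPi, in which case Alice reads silence on line~\ref{alg:AliceReadsBobsKey-vl} and returns on line~\ref{alg:AliceReturnsBecauseSilence-vl}, or the round is progressive, so Alice successfully transmits the next $\kappa_t$-bit chunk of $m$, advances $j$ by $\kappa_t$, and flips $\paritybit$. Since $\kappa_t \ge 1$ and $\lvert m \rvert$ is finite, the guard $j < \lvert m \rvert$ must fail after at most $\lvert m \rvert$ iterations past $r_4$, yielding the required round $r_3$.

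For part~(2), I would maintain a chunk-level invariant across iterations of Alice's while loop: whenever Alice re-enters the loop with state $(j, \paritybit)$, Bob's persistent buffer $\partialmessage$ consists of the first $j$ bits of $m$, his parity variable $\hat{b}$ is set so that a retransmission of the current chunk is recognized as such rather than as a new chunk, and Bob has not yet recorded any message during this call to \sendMessage. The inductive step splits on the five scenarios of Section~\ref{sec:overview}: Scenarios~1--4 leave the invariant intact because either nothing is appended to $\partialmessage$ or the failed traffic is discarded by Bob's decoding check; a progressive round advances both Alice's $j$ and Bob's $\partialmessage$ by one chunk in lockstep; and Scenario~5 is the subtle case, where Bob has already appended a chunk that Alice believes unsent, so on the retry Bob must overwrite the last $\lambda$ bits of $\partialmessage$ with the same bits rather than append anew. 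Once the final chunk has been appended, $\partialmessage = m$, and by the prefix-free property of $\pfl$ no proper prefix of $m$ lies in $\pfl$, so Bob's completion check first fires exactly at this moment, causing him to record $m$ on line~\ref{bob:record-vl}. If Bob \terminatedPi before this moment, the alternative conclusion of the lemma statement is satisfied.

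The main obstacle will be verifying the invariant through Scenario~5: I need to argue that after Bob's premature append, the combined states of Alice and Bob still conspire to produce a retransmission that Bob recognizes as a duplicate rather than as a distinct new chunk. This relies on the parity-bit bookkeeping in Algorithm~\ref{bobAlgo-vl}, specifically the distinction between the ``append $M'$'' and ``replace the last $\lambda$ bits'' branches, together with the fact that, by Lemma~\ref{lem:goodEvents} (whose statement and proof carry over essentially verbatim, since the underlying encoding, authentication and silence-detection primitives are unchanged), no spurious adversarial traffic between the premature append and the retry can be accepted by Bob as a new chunk. With these ingredients in hand, the remainder of the proof is a straightforward adaptation of Lemma~\ref{lem:msgDel}.
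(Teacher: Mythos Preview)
Your Part~(1) argument is essentially the paper's, and is fine.

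For Part~(2), your approach and the paper's diverge. You propose a loop invariant synchronizing Alice's pointer $j$ with Bob's buffer $\partialmessage$; the paper instead inducts on $|m|$, peeling off the first chunk $m_0$ once Alice has received acknowledgment for it, and observing that from that point the control flow is indistinguishable from a fresh call to $\sendMessage(m_1)$ on the shorter remainder $m_1$ (with suitably updated persistent variables). The induction hypothesis then delivers the conclusion for $m = m_0 \circ m_1$ directly.

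Your invariant, as stated, has a genuine gap. You assert that at each re-entry to the loop, $\partialmessage$ equals $m[0,j]$ and Bob has not yet recorded. But after a Scenario~5 round, Alice re-enters with the \emph{same} $(j,\paritybit)$ while Bob's $\partialmessage$ already contains $m[0,j+\kappa_t]$; if the next several rounds are corrupted (Cases~2--4), Alice keeps re-entering with that $(j,\paritybit)$ and the invariant remains false throughout. Worse, if Scenario~5 occurs on the \emph{final} chunk, Bob has already recorded $m$ and reset $\partialmessage \gets \varnothing$, so both clauses of your invariant fail simultaneously. Your remark that ``on the retry Bob must overwrite the last $\lambda$ bits'' describes what happens at the next \emph{progressive} retry, but does not repair the invariant for the intervening rounds, nor does it address the last-chunk case where there is nothing left to overwrite.

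The fix is either to weaken the invariant to a disjunction---$\partialmessage$ is $m[0,j]$, or $\partialmessage$ is $m[0,j]$ extended by the chunk Alice is currently (re)sending and the parity bits are aligned for an overwrite, or Bob has already recorded $m$ exactly once and will ignore the retransmission---and then verify all transitions preserve this three-way disjunction; or to adopt the paper's length induction, which avoids the bookkeeping by only advancing the argument at the moment Alice's view catches up with Bob's.
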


\begin{proof}
We first show that if Alice calls \sendMessage(m) in round $r_1$, then there must exist some round $r_3 \geq r_1$ in which this call returns.  
Since the adversary's budget is finite, there must be some round, $r_4$, after which no bits are ever flipped.  
If the call returns before or during round $r_4$, then part (1) is proven, so we only consider the cases where the call extends past round $r_4$. Let $m'$ be the part of the message that remains to be sent. For $i \ge 1$ consider round $r_4 +i$. There are two possibilities:
\begin{itemize}
\item [(a)] After Alice has sent a key request in round $r_4 +i$, she hears silence and the call returns, or; 
\item[(b)] Alice correctly receives Bob's key, and uses it to send  the next piece of $m$ which Bob correctly receives.  Next, Alice hears silence from Bob.  If this was the last piece of the message, this ends the call. If not, the call continues into round $r_4+i+1$ with a shorter remaining message. 
\end{itemize}
Since the message has finite length, eventually (a) occurs. Thus, in every case there is some round $r_3 \geq r_1$ in which the call to \sendMessage(m) returns. \\

We now prove part (2) of our lemma.  We first show that the message $m$ is recorded at most once by Bob in the rounds $r_1$ to $r_3$.  By Lemma~\ref{lem:goodEvents} (2(c)), the bit $b'$ corresponding to the last partial message for $m$ received by Bob in line~\ref{alg:bobDecode-vl} of Algorithm~\ref{bobAlgo-vl} must be the same as the bit $b$ sent by Alice for this partial message from rounds $r_1$ to $r_3$.  Since this bit never changes, Bob will record $m$ (line~\ref{bob:record-vl} of Algorithm~\ref{bobAlgo-vl}) at most once in rounds $r_1$ to $r_3$. \\

If Bob \terminatedPi\ before round $r_3$, then part (2) of our lemma statement does not require Bob to record the message $m$, and so that part or our lemma is trivially true.  Thus, for the remainder of the proof, we assume that Bob has not \terminatedPi\ before round $r_3$ and we must show that he records the message $m$ exactly once.\\ 

We do this by induction on the length of $m$.  Note that since $m$ is a message from $\pi$, it belonged to $\pfl$, so it is actually possible for Bob to record $m$.  If $m$ is short enough to be sent in one piece, then  consider round $r_3$ in which the call to \sendMessage($m$) returns. Let $M_1$ be the string read by Alice on line~\ref{alg:AliceReadsBobsKey-vl} of Algorithm~\ref{aliceAlgo-vl}. Since Bob has not \terminatedPi, Lemma~\ref{lem:goodEvents} (2(b)) guarantees that Bob sent $M_1$, and therefore the call to Algorithm~\ref{aliceAlgo-vl} does not return on line~\ref{alg:AliceReturnsBecauseSilence}. Hence, it returns on line~\ref{alg:AliceReturnsFinally-vl}.  Since this is the last round, $M_1$ must correctly decode to $(k_B, k_A)$ in Algorithm~\ref{aliceAlgo-vl}. Thus, Alice sends Bob $m$ using $k_B$ and hears silence subsequently. Thus, Bob must have actually been silent at this time (Lemma~\ref{lem:goodEvents} (1)), which only happens if he has either now or previously recorded $m$. Hence, Bob must have recorded the message $m$ in some round $r_2 \leq r_3$. \\

Now as an induction hypothesis suppose the conclusion about Bob recording a message exactly once is true for all strings $s$ shorter than $m$. Suppose  $m$ requires more than one piece to be sent. Then since Alice continues to resend the first piece $m_0$ until she has received confirmation that Bob has received it, there is some round $r_1' < r_3$ in which Alice receives this confirmation. Since the same parity bit $b$ for this piece  has been sent in all rounds $r\le r_1'$ , and received bit $b'$ agrees with $b$ in each of these rounds, Bob stores the partial message $m_0$ exactly once. Let $m_1$ be the remaining portion of $m$. Clearly it is shorter than $m$. Now let us examine the control flow throughout the algorithm from round $r_1'$ onwards. This looks exactly like a call to \sendMessage($m_1$) with the persistent variable $b$ now set to $\lnot b$\jared{is this right?}, together with a \receiveMessage \ call in which the persistent variables $\hat{b}$ set to $\lnot \hat{b}$ \jared{correct??}, $\mu$ set to $m_0$, and $\lambda$ set to $|m_0|$. By induction hypothesis, there is exactly one round $r_2$ with $r_1' \le r_2 \le r_3$ during which Bob records $m_0 \circ m_1$, which is in $\pfl$. But since $m = m_0 \circ m_1$, this concludes the proof.
\end{proof}

\begin{lemma}
\label{lem:correctnessRounds-vl}
The following holds for protocol $\pi'$. For any message $m$ that Bob records (line~\ref{bob:record-vl} of Algorithm~\ref{bobAlgo-vl}) in some round $r_2$, Alice started a call to \sendMessage $(m)$ in a round $r_1 \leq r_2$ and returned from that call in some round $r_3 \geq r_2$. 	
\end{lemma}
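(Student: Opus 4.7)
The plan is to adapt the argument of Lemma~\ref{lem:correctnessRounds} from the single-bit setting, while accounting for the fact that in Algorithm~\ref{bobAlgo-vl} the recorded message $m$ is assembled out of several partial messages $M'$ received over potentially many rounds into the buffer $\mu$. Fix the round $r_2$ in which Bob executes line~\ref{bob:record-vl}. At that moment $\mu = m \in \pfl$, and $\mu$ was built up by a sequence of assignments (appending a new $M'$, or overwriting the last $\lambda$ bits) that took place in some ordered sequence of rounds $s_1 < s_2 < \cdots < s_\ell = r_2$. My first step is to identify, for each $s_i$, the partial message $M'_i$ that Bob decoded on line~\ref{alg:bobDecode-vl}, and to apply Lemma~\ref{lem:goodEvents}(2(c)) to conclude that in round $s_i$ Alice actually transmitted $\encodedMsg{(M'_i,b'_i)}{k_B}{t}$ on this channel. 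In particular, in each such round Alice was inside some call to \sendMessage.

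Next I would argue that all of these partial transmissions come from a \emph{single} call \sendMessage$(m)$ started in some round $r_1\le s_1$. Consider the last round $s_\ell=r_2$: by the control flow of Algorithm~\ref{bobAlgo-vl}, $\mu$ is completed by appending (or overwriting with) $M'_\ell$, and by the case analysis of Algorithm~\ref{aliceAlgo-vl}, Alice's call \sendMessage$(m^*)$ that produced $M'_\ell$ must have $M'_\ell = m^*[j,j+\kappa_t]$ for some offset $j$. Stepping backwards through the $s_i$'s, using the parity-bit discipline that Alice updates $b$ only upon a silent acknowledgement, and that Bob updates $\hat{b}$ correspondingly so that repeated partial messages overwrite rather than append, each $M'_i$ is seen to correspond to a contiguous piece of the same in-progress argument $m^*$ of \sendMessage. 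Thus the transmitted pieces, read in order and with Bob's overwrite/append rules applied, reconstruct exactly the argument $m^*$ of a single call, which equals $\mu = m$. Using the prefix-freeness of $\pfl$ we see that once $\mu$ becomes a valid element of $\pfl$, no further piece of $m^*$ could have produced this same assembled content, so $m=m^*$ and the call is indeed \sendMessage$(m)$.

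Finally, the call is still in progress at round $r_2$, so it began in some round $r_1\le s_1\le r_2$, and by part~(1) of Lemma~\ref{lem:msgDel-vl} it returns in some round $r_3\ge r_2$, completing the proof.

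The main obstacle I anticipate is the middle step: carefully tying the sequence of appended/overwritten partial messages in $\mu$ to a \emph{single} call \sendMessage$(m)$ rather than to a concatenation coming from several calls. This requires a small invariant tracking the parity bits $b,\hat{b}$ and the offset $j$ in Algorithm~\ref{aliceAlgo-vl} across rounds, together with the observation that Alice never starts a new \sendMessage call on this channel until the previous one has returned (enforced by the FIFO scheduling in Algorithm~\ref{multipartyAlgo}); the prefix-free property of $\pfl$ is what guarantees that the assembly into $\mu$ terminates at a unique point corresponding to exactly one transmitted message.
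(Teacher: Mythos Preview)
Your proposal is correct and follows essentially the same approach as the paper. Both arguments use Lemma~\ref{lem:goodEvents}(2(c)) to certify that each decoded partial message was genuinely sent by Alice, then tie the sequence of partial messages to a single \sendMessage\ call via the parity-bit and offset discipline, invoke prefix-freeness of $\pfl$ to conclude the call's argument is exactly $m$, and finish with Lemma~\ref{lem:msgDel-vl}(1). The only cosmetic difference is that the paper organizes the ``single call'' step as a forward induction on the length of the stored prefix $\mu$ (with base case $\mu=\varnothing$ used to pin down that $m$ is a \emph{prefix} of Alice's argument $\hat m$, whence prefix-freeness forces $\hat m=m$), whereas you organize it as a backward walk through the rounds $s_1<\cdots<s_\ell$; these are equivalent framings of the same invariant.
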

\begin{proof}
%Again we proceed by induction on the length of $m$. 
%First suppose that $m$ is short enough to be sent in one piece.
Consider the round $r_2$ in which the Bob records a message $m$ (line~\ref{bob:record-vl} of Algorithm~\ref{bobAlgo-vl}). Suppose $\mu = \varnothing$ when \receiveMessage \ was called in round $r_2$. Then $m$ arose in the first component of $\decodeMsg{M_1'}{t}$ in line~\ref{alg:bobDecode-vl} of Algorithm~\ref{bobAlgo-vl}, where the second component correctly matched $k_B$. Specifically the first component must have been $(m, b) $ for some $b$. Thus, by Lemma~\ref{lem:goodEvents} (2(c)), Alice must have sent $\encodedMsg{(m,b)}{k_B}{t}$, during the transmission of $M_1'$.  Hence, Alice must be in a call to \sendMessage \ with some message $\hat{m}$ from $\pfl$, with  $m$ as a contiguous substring, and this call must have begun in some round $r_1 \leq r_2$. We must show that $\hat{m} =m$. Suppose $m$ is a proper contiguous substring of $\hat{m}$, \emph{i.e.}, $\hat{m}= m_0 \circ m \circ m_1$ where at least one of $m_0$ and $m_1$ is not $\varnothing$. Then $m_0$ is not in $\pfl$ (since it is a prefix of $\hat{m}$). Since Alice would not proceed to sending $m$ until she had confirmed that Bob had received $m_0$, she must have previously received that confirmation, meaning that Bob previously received $m_0$. But in that case, Bob would have that stored as a partial message, contradicting the assumption that $\mu = \varnothing$ when \receiveMessage \ was called in round $r_2$. Thus $m_0 = \varnothing$. But that means that $m$ is a prefix of $\hat{m}$. Since both $m$ and $\hat{m}$ are in $\pfl$,  it follows that $m_1 = \varnothing$ and $\hat{m} = m$. Thus, Alice called \sendMessage($m$). \\

Next suppose $\mu = s$ when \receiveMessage \ was called in round $r_2$, for some string $s \neq \varnothing$. Then $m = s\circ x$, where $x$ is the string Bob decodes in round $r_2$. We first induct on the length of $s$ to prove that Alice's call to \sendMessage \ had as input, some message $\hat{m}$ from $\pfl$, such that $\hat{m}$ contains  $s \circ x =  m$ as a contiguous substring, \emph{i.e.}, $\hat{m} = m_0 \circ m \circ m_1$ for some strings $m_0$, $m_1$. The case of $s = \varnothing$ was shown above. Thus, assume there exist rounds $\rho_0 < \rho_1 \leq r_2$ and non-empty strings $s_0$ and $s_1$, with $s = s_0 \circ s_1$ such that $\mu = s_0$ when \receiveMessage \ was called in round $\rho_0$ and $\mu = s_1$ when \receiveMessage \ was called in round $\rho_1$. Here, $\rho_0$ and $\rho_1$ are the first rounds with this property. Then Bob decoded $s_0$ in round $\rho_0-1$ and $s_1$ in round $\rho_1-1$. Furthermore, Bob has not decoded any other string between decoding $s_0$ and $s_1$. Then, it follows from the induction hypothesis that $s_0$ and $s_1$ came from Alice, so that Alice must be in call to \sendMessage \ during the course of which she sends encoded strings $s_0,s_1,x$ in this order. Moreover, she cannot have sent any other encoded strings in between, because had she done so, she would not have moved on from it without acknowledgement that Bob decoded it. Since we stipulate that Bob did not decode any other strings, it follows that the input to Alice's call to \sendMessage \ had $m = s_0 \circ s_1 \circ x$ as a contiguous substring so that $\hat{m} = m_0 \circ m \circ m_1$ for some strings $m_0$, $m_1$. Note that $m_0$ is not in $\pfl$. Now, when Bob decoded $s_0$, the call to \receiveMessage \ was passed $\varnothing$.  Thus, from the discussion in the beginning of the proof, it follows that $m_0 =\varnothing$ and therefore $m_1=\varnothing$ and $\hat{m}=m$. Thus, in all cases, Alice did actually have a call to \sendMessage($m$) that began in some round $r_1 \le r_2$.\\
  				
Finally, we know by Lemma~\ref{lem:msgDel-vl}(1) that every call to \sendMessage \ ends in some round $r_3$.  Since, by the above, Alice is in the call during round $r_2$, it must be the case that Alice returns from the call in some round $r_3 \geq r_2$.
\end{proof}

\subsection{Resource Costs}
We now analyze the total number of bits sent by $\pi'$ when the messages in $\pi$ may be longer than a single bit. Let $\alpha$ be the average message length in $\pi$. 

\begin{lemma}
	\label{lem:resourceCost2}
	If $\pi'$ succeeds, then it has the following resource costs.
	\begin{itemize}
	\item The number of bits sent is $\mathcal{O} \paran{L\paran{1 + \frac{1}{\alpha}\log \paran{\frac{n L}{\errorprob}} + T}}$.
	\item The latency is $\mathcal{O} \paran{\underset{p}{\max} \left \{ \Lambda_p\paran{1 + \frac{1}{\alpha}\log \paran{\frac{n(L+T)}{\errorprob}} + T_p \right \}}}$, where  $p$ is any communication path in the asynchronous simulation of $\pi$, $\Lambda_p$ is the latency of $p$, and $T_p$ is the total number of bits flipped by the adversary on edges in $p$.
	\end{itemize}

\end{lemma}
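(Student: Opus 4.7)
The plan is to follow the accounting strategy of Lemma~\ref{lem:resourceCosts1}, but with a finer classification of rounds that reflects the fact that each $\pi$-message of length $\ell$ is now broken into roughly $\lceil \ell / \kappa_r \rceil$ pieces during transmission. The crucial structural observation is that in round $r$ both the key length $\kappa_r$ and the word length $w_r$ are $\Theta(\log(nr/\errorprob))$, so $w_r \le C\kappa_r$ for some absolute constant $C$. Hence a round that carries a full $\kappa_r$-bit piece of content is \emph{efficient}---its $4w_r$ transmitted bits exceed the content it carries by only a constant factor---whereas a round carrying fewer than $\kappa_r$ bits of content (the terminal fragment of some message) can cost up to $4w_r$ bits with no guaranteed content savings.

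I would classify each round as (i) \emph{corrupted} (the adversary successfully disrupted at least one word), (ii) \emph{full progressive} (Bob records a fresh $\kappa_r$-bit piece), or (iii) \emph{final progressive} (Bob records the terminal piece of some message of $\pi$, which may be shorter than $\kappa_r$). The analysis of corrupted rounds from Lemma~\ref{lem:resourceCosts1} is unchanged: the ECC forces the adversary to spend a constant fraction of the $w_r$ bits of any round it disrupts, so Lemma~\ref{lem:numRoundsTil} yields $\mathcal{O}\paran{T/\log(n(T+1)/\errorprob)}$ corrupted rounds contributing $\mathcal{O}(T)$ bits in total. The number of final rounds is at most $M = L/\alpha$, since each $\pi$-message has exactly one terminal piece.

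The heart of the argument is then the bit accounting for progressive rounds. Full rounds together carry at most $L$ content bits; since each such round transmits $4w_r = \mathcal{O}(\kappa_r)$ bits per $\kappa_r$ content bits carried, their aggregate transmitted cost is $\mathcal{O}(L)$. Each of the at most $L/\alpha$ final rounds costs $4w_r$ bits. Bounding the round index crudely by $R = \mathcal{O}(L + L/\alpha + T/\log(nT/\errorprob))$ gives $w_r = \mathcal{O}(\log(n(L+T)/\errorprob))$, so the final rounds contribute at most $\mathcal{O}((L/\alpha)\log(n(L+T)/\errorprob))$ bits. Absorbing $T$ inside the logarithm into an additive $\mathcal{O}(T)$ term via $\log(1+T/L) \le T/L$ and $\alpha \ge 1$---exactly as at the end of Lemma~\ref{lem:resourceCosts1}---collapses the bound to the claimed $\mathcal{O}\paran{L\paran{1 + \alpha^{-1}\log(nL/\errorprob)} + T}$.

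For the latency bound, I would repeat the same three-way classification restricted to a single communication path $p$, replacing $L$ by $\Lambda_p$ and $T$ by $T_p$; an induction up the message-dependency DAG (every incoming message a user waits on arrives within the latency budget of some other path) shows that the overall latency equals the maximum per-path latency. The main obstacle is the bookkeeping in the classification step: because Alice's and Bob's views of a round can differ (see Case~5 of Section~\ref{sec:overview}), I must justify that each round I call ``final progressive'' corresponds to exactly one $\pi$-message terminal piece actually entering Bob's transcript, and that pieces are never double-counted across the full/final boundary. Both of these follow from Lemmas~\ref{lem:msgDel-vl} and~\ref{lem:correctnessRounds-vl}, which guarantee a one-to-one correspondence between completed \sendMessage\ calls and recorded messages, so the accounting is well-defined.
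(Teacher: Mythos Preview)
Your proposal is correct and follows essentially the same three-way decomposition as the paper (full submessages, terminal submessages, and corrupted rounds). The one cosmetic difference is in how you account for full progressive rounds: you bound their aggregate bit cost directly via the ratio $w_r = \mathcal{O}(\kappa_r)$, whereas the paper first bounds their \emph{number} by $\mathcal{O}(L/\log(nL/\errorprob))$ using Lemma~\ref{lem:numRoundsTil} and then converts the total round count to a bit count via $\tau(r)$ from Lemma~\ref{lem:taubound}; both routes give the same $\mathcal{O}(L)$ contribution and the same final bound.
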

\begin{proof}

Call a string $M$ encoded in Step~\ref{alg:AliceKey-vl} of Algorithm~\ref{aliceAlgo-vl} a \emph{submessage}.  To upper-bound the number of bits sent, we assume pessimistically that in every round of $\pi'$, there is an attempt to send exactly one submessage.  This maximizes the number of bits sent since word sizes increase with time.  

Note that each message consists of at most $1$ submessage that is not of length some constant times the word length in that round.  Thus, the number of progressive rounds is no more than the number of messages sent, $L/\alpha$, plus the largest integer $x$ such that the number of bits sent in $x$ rounds equals $c_1 L$ for some constant $c_1 > 1$.  By Lemma~\ref{lem:numRoundsTil}, $x = O(L / \log \frac{nL}{\errorprob} )$.

The number of non-progressive rounds is $\mathcal{O}\paran{\frac{T}{\log \paran{n(T+1)/\errorprob}}}$, by the same argument as from the proof of Lemma~\ref{lem:resourceCosts1}.

Let the number of rounds $r = L/\alpha + O(L / \log \frac{nL}{\errorprob}) + \mathcal{O}\paran{\frac{T}{\log \paran{n(T+1)/\errorprob}}}$.  Then by Lemma~\ref{lem:taubound}, we can bound $\tau(r)$ as follows.

\begin{eqnarray*}
	\tau(r) & \leq &  \paran{L/\alpha + O\paran{\frac{L} {\log (nL / \errorprob)}} + \mathcal{O}\paran{\frac{T}{\log \paran{n(T+1)/\errorprob}}}}\paran{\log \frac{nr}{\errorprob}};\\
	& = & \mathcal{O} \paran{\paran{L/\alpha + \frac{L} {\log (nL / \errorprob)} + \frac{T}{\log \paran{n(T+1)/\errorprob}}}\paran{\log \frac{n(L+T)}{\errorprob}}};\\
	& = & \mathcal{O} \paran{\frac{1}{\alpha}\log \paran{\frac{n(L+T)}{\errorprob}} + L + T}.
\end{eqnarray*} 
 
Finally, to bound the latency, we note that the argument above holds for any communication path $p$ in the asynchronous simulation of $\pi$.  
 
\end{proof}

\begin{theorem}
	 The protocol $\pi'$ terminates with a correct simulation of $\pi$ with probability at least $1-\errorprob$.  If $\pi'$ terminates correctly, it has the following resource costs.
	\begin{itemize}
	\item The expected number of bits sent is $\mathcal{O} \paran{L\paran{1 + \frac{1}{\alpha}\log \paran{\frac{n(L+T)}{\errorprob}} + T}}$.
	\item The expected latency is $\mathcal{O} \paran{\underset{p}{\max} \left \{ \Lambda_p\paran{1 + \frac{1}{\alpha}\log \paran{\frac{n(L+T)}{\errorprob}} + T_p \right \}}}$, where  $p$ is any communication path in the asynchronous simulation of $\pi$, $\Lambda_p$ is the latency of $p$, and $T_p$ is the total number of bits flipped by the adversary on edges in $p$.
	\end{itemize}
\end{theorem}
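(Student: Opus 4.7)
The plan is to assemble this theorem directly from the lemmas already established for the variable-length setting, mirroring the proof of Theorem~\ref{thm:mainUnbounded} in the single-bit case. First I would observe that the failure probability bound is inherited unchanged from the single-bit analysis: the three failure events (\failAMD, \failConversionSilence, \failKeyInstallation) depend only on the AMD parameter $\amdstr_r$, the number of random padding bits, and the length of the keys used in round $r$. None of these quantities are altered by allowing variable-length messages, since Algorithms~\ref{aliceAlgo-vl} and~\ref{bobAlgo-vl} still use the same $\encodedMsg{\cdot}{\cdot}{r}$ and key-length schedule as before. Consequently Lemmas~\ref{lem:AMDfailureunbounded}, \ref{lem:Silenceunbounded}, and \ref{lem:guessUnbounded} apply verbatim, and by the union bound of Lemma~\ref{lem:failureTotal} the total probability of any failure event is at most $\errorprob$.

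Next I would establish correctness conditional on no failure event occurring. The key observation is that Lemma~\ref{lem:goodEvents} is a per-round statement about what a received word $W$ can be, and its proof only uses the failure-event assumptions and the structure of the encoding; it therefore transfers without change to the variable-length setting. Given this, Lemmas~\ref{lem:msgDel-vl} and \ref{lem:correctnessRounds-vl} provide the two-directional matching between calls to \sendMessage and Bob's recordings of messages, now accounting for messages being delivered piece by piece and reassembled via the prefix-free language $\pfl$. Applying the same FIFO-channel simulation argument as in Lemma~\ref{lem:correctness}, with Lemmas~\ref{lem:msgDel-vl} and \ref{lem:correctnessRounds-vl} replacing their single-bit counterparts, shows by induction on the I/O automaton $\pi_u$ for each processor $u$ that $\pi'$ produces a valid asynchronous transcript of $\pi$.

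The resource bounds are supplied directly by Lemma~\ref{lem:resourceCost2}: conditional on success, the total number of bits sent is $\mathcal{O}\paran{L\paran{1+\tfrac{1}{\alpha}\log\paran{\tfrac{n(L+T)}{\errorprob}}}+T}$, and the latency bound matched to each communication path $p$ follows from the same argument applied per-path. Combining these with the success probability $\ge 1-\errorprob$ and noting that upon failure we make no guarantee on resource cost, we obtain the stated expected bounds (failure adds at most $\errorprob$ times a negligible extra tail in the asymptotic statement, which is absorbed into the $\mathcal{O}(\cdot)$).

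The main obstacle, and the place where I would have to be most careful, is verifying that Lemma~\ref{lem:goodEvents} really does transfer unchanged to the variable-length algorithm. In Algorithm~\ref{bobAlgo-vl}, Bob maintains persistent state $\mu$, $\lambda$, $\hat b$ across rounds, and the bookkeeping for splicing submessages back together can in principle interact subtly with a scenario in which the adversary causes Alice and Bob to disagree about whether a submessage was received. I would therefore check explicitly that the parity bit $b$ is toggled \emph{exactly} when a submessage has definitively been accepted by both parties, so that the induction in Lemma~\ref{lem:msgDel-vl} on the length of the remaining suffix $m_1$ of the message really goes through and reduces to the base case (single-piece message) that mirrors the single-bit proof. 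Once this reassembly invariant is verified, the rest of the argument is a routine assembly of the previous lemmas.
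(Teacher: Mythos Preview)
Your proposal is correct and follows essentially the same assembly as the paper's own proof, which simply invokes Lemma~\ref{lem:failureTotal} for the failure probability, Lemma~\ref{lem:correctness} (implicitly supported by the variable-length Lemmas~\ref{lem:msgDel-vl} and~\ref{lem:correctnessRounds-vl}) for correctness, and Lemma~\ref{lem:resourceCost2} for the resource bounds. Your version is more explicit about why Lemma~\ref{lem:goodEvents} and the failure-probability lemmas transfer unchanged and about the parity-bit bookkeeping, but the underlying structure is identical.
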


\begin{proof}
	By Lemma~\ref{lem:failureTotal}, no failure event happens with probability at least $1-\errorprob$, and by Lemma~\ref{lem:correctness}, in such a situation, $\pi'$ terminates with an asynchronous simulation of $\pi$.  Upon correct termination, the resource cost bounds hold by Lemma~\ref{lem:resourceCost2}.
\end{proof}

\section{Conclusion and Future Work}
\label{sec:conclusion}
We have described the first algorithm in interactive communication for $n$ users that deals with the case of unknown number of bits sent by the protocol, while tolerating an unbounded but finite amount of noise. Against an adversary that flips $T$ bits, given an $\errorprob \in (0,1)$, our algorithm compiles a noise free protocol $\pi$ that sends $L$ bits into a robust protocol $\pi'$ that succeeds with probability $1-\errorprob$, and upon successful termination, sends $\mathcal{O} \paran{L\paran{1 + \frac{1}{\alpha}\log \paran{\frac{n(L+T)}{\errorprob}}} + T}$ bits, where $\alpha$ is the average message length in $\pi$. The blowup in the number of bits is constant for long messages in $\pi$ and within logarithmic factors of the optimal, otherwise. \\

\par Several open problems remain including the following. First, can we adapt our results to interactive communication where $\pi$ is a synchronous protocol? Second, can we handle an unknown amount of stochastic noise more efficiently, while making no assumption on the value of $L$ or $T$? Finally, for any algorithm, what is the minimum number of private random bits required to be hidden from the adversary to achieve robustness?

\printbibliography
\end{document}